\pgfplotsset{compat=1.15}
\def\B{ {\mathcal B} }
\def\C{ {\mathcal C} }
\def\F{ {\mathcal F} }
\def\L{ {\mathcal L} }
\def\U{ {\mathcal U} }
\newcommand{\tra}[1]{\mathrm{tr}\left( #1 \right)}
\def\>{\rangle}
\def\<{\langle}
\newcommand{\ketbra}[2]{\ensuremath{\left|#1\right\rangle\!\!\left\langle#2\right|}}
\newcommand{\matrixel}[3]{\ensuremath{\left\langle #1 \vphantom{#2#3} \right| #2 \left| #3 \vphantom{#1#2} \right\rangle}}
\newcommand{\iden}{\mathbb{I}} 
\def\non{ \nonumber\\}
\newtheorem{theorem}{Theorem}
\newtheorem{lemma}[theorem]{Lemma}
\newtheorem{definition}[theorem]{Definition}
\newtheorem{corollary}[theorem]{Corollary}
\newtheorem{conjecture}[theorem]{Conjecture}
\begin{document}

% -------------------------------------------------------------
% TITLE & ABSTRACT
% -------------------------------------------------------------
	
\title{Algebraic and geometric structures inside the Birkhoff polytope}

\author{Grzegorz Rajchel-Mieldzioć}
\affiliation{Center for Theoretical Physics, Polish Academy of Sciences, 02-668 Warszawa, Poland}

\author{Kamil Korzekwa}
\affiliation{Faculty of Physics, Astronomy and Applied Computer Science, Jagiellonian University, 30-348 Krak\'{o}w, Poland}

\author{Zbigniew Pucha{\l}a}
\affiliation{Faculty of Physics, Astronomy and Applied Computer Science, Jagiellonian University, 30-348 Krak\'{o}w, Poland}
\affiliation{Institute of Theoretical and Applied Informatics, Polish Academy of Sciences, 44-100 Gliwice, Poland}

\author{Karol {\.Z}yczkowski}
\affiliation{Center for Theoretical Physics, Polish Academy of Sciences, 02-668 Warszawa, Poland}
\affiliation{Faculty of Physics, Astronomy and Applied Computer Science, Jagiellonian University, 30-348 Krak\'{o}w, Poland}

\date{\today}

\begin{abstract}
    The Birkhoff polytope~$\B_d$ consisting of all bistochastic matrices of order~$d$ assists researchers from many areas, including combinatorics, statistical physics and quantum information. Its subset~$\U_d$ of unistochastic matrices, determined by squared moduli of unitary matrices, is of a particular importance for quantum theory as classical dynamical systems described by unistochastic transition matrices can be quantised. In order to investigate the problem of unistochasticity we introduce the set~$\L_d$ of bracelet matrices that forms a subset of~$\B_d$, but a superset of~$\U_d$. We prove that for every dimension~$d$ this set contains the set of factorisable bistochastic matrices~$\F_d$ and is closed under matrix multiplication by elements of~$\F_d$. Moreover, we prove that both~$\L_d$ and~$\F_d$ are star-shaped with respect to the flat matrix. We also analyse the set of~$d\times d$~unistochastic matrices arising from circulant unitary matrices, and show that their spectra lie inside~$d$-hypocycloids on the complex plane. Finally, applying our results to small dimensions, we fully characterise the set of circulant unistochastic matrices of order~$d\leq 4$, and prove that such matrices form a monoid for $d=3$.
\end{abstract}

\maketitle

% -------------------------------------------------------------
% SEC. I - INTRODUCTION
% -------------------------------------------------------------

\section{Introduction}
\label{sec:intro}

Discrete-time dynamics of a finite-dimensional classical system, whose state is described by a $d$-dimensional probability vector, can be represented by a left stochastic matrix of size $d$. The entries of such a matrix are all non-negative and are summing to unity in each column. By imposing a symmetric constraint on rows one obtains the set $\B_d$ of $d$-dimensional \emph{bistochastic} matrices, 
also called doubly-stochastic. Through the Birkhoff's theorem~\cite{bhatia2013matrix}, the set $\B_d$ has a clear geometric interpretation, as it is given by the convex hull of all permutation matrices of order $d$. The set $\B_d$ forms a polytope in ${\mathbbm R}^{(d-1)^2}$, often called the \emph{Birkhoff polytope}, with vertices given by $d\ \! \! !$ permutation matrices. Its centre is occupied by the flat matrix $W_d$, with all entries equal to $1/d$, which is also called the van der Waerden matrix due to his conjecture on minimal permanent of a bistochastic matrix~\cite{van1926aufgabe}. From the algebraic perspective, the set $\B_d$ forms a \emph{monoid}, as it is closed under (associative) matrix multiplication and contains the identity matrix $\iden_d$.

In this work we will investigate the connection between $\B_d$ and the set of $d\times d$ unitary matrices that describe the closed dynamics of a $d$-dimensional quantum system. As the rows and columns of a unitary matrix $U$ are normalised to unity, every matrix with entries $|U_{jk}|^2$ is bistochastic. However, the converse is not true. This is due to the fact that the rows and columns of unitary matrices are not only normalised, but they also need to be mutually orthogonal, which results in additional constraints. A bistochastic matrix that stems from some unitary matrix is called \emph{unistochastic}~\cite{Be04}.

The problem of characterising the set $\U_d$ of unistochastic matrices of size $d$ is related to the issue of quantization of classical dynamical systems determined by a given bistochastic transition matrix~\cite{PKZ01,PTZ03}. Alternatively, unistochasticity is also linked
to the problem of finding all discrete quantum walks on graphs with $d$ vertices. Recall that in a classical random walk on the line the walker starts at a node $n=0$ and at each time step moves with probability $p_{-}$ to the left (to the node $n-1$), with probability $p_+$ to the right (to the node $n+1$), and with probability $p_0$ does not move (stays at the node $n$). The straightforward quantum equivalent of such a process would be given by a unitary transformation that transforms the basis states $\{\ket{n}\}$ with $n\in\mathbb{Z}$ as follows~\cite{ambainis2003quantum}
\begin{equation}
	\ket{n}\rightarrow c_- \ket{n-1} + c_0 \ket{n} + c_+ \ket{n+1},
\end{equation}
with $|c_i|^2=p_i$, where $i\in\{\pm,0\}$. However, as proven in Ref.~\cite{meyer1996quantum}, no unitary process can induce such a transformation beyond the trivial cases (i.e., one $p_i=1$ and the remaining ones equal to 0). Therefore, most researchers focus on a less straightforward definition of a quantum random walk that was first proposed by the authors of Ref.~\cite{aharonov1993quantum}. It is based on introducing the additional coin degree of freedom, so that walker's states are given by $\ket{n,k}$, where $n$ corresponds to the position of the walker and $k$ to the state of the coin. Then, at each step of the quantum random walk there is a ``coin flip'' (unitary transformation of the coin), followed by the ``shift'' (change of the position based on the state of the coin). 

Not being discouraged by the fact that a natural quantum generalisation of a random walk is impossible on an infinite line, one can ask when it is possible. Focusing on random walks on finite graphs, the set of states is not given by all integer numbers, but rather by a set $\{1,\dots,d\}$. A general walker then moves from a node $k$ to a node $j$ with transition probability $T_{jk}$, and so its behaviour is fully specified by a stochastic transition matrix $T$. Now, the straightforward quantum equivalent of this process is given by a unitary transformation $U$ that transforms the basis states $\{\ket{k}\}$ with $k\in\{1,\dots,d\}$ as follows
\begin{equation}
U\ket{k} = \sum_{j=1}^{d} U_{jk} \ket{j}, \quad \mathrm{with}\quad|U_{jk}|^2=T_{jk}.
\end{equation}
Therefore, given a classical random walk on a graph with $d$ nodes and described by a stochastic matrix $T$, its quantum equivalent exists if and only there exists a unitary matrix $U$ satisfying the above equation. It is thus clear that the necessary and sufficient condition for the existence of a quantum random walk corresponding to a classical walk $T$ is that the transition matrix is unistochastic.

Beyond the mere classification of random walks into the ones for which quantum realisations do or do not exist, investigating the problem of unistochasticity can give us insight into the nature of randomness~\cite{korzekwa2018coherifying}. Classically, random processes are necessarily irreversible; but unitary quantum processes are fully deterministic and reversible despite the fact that they can lead to random measurement outcomes. Thus, these random walks that correspond to unistochastic matrices can be generated in nature completely deterministically, despite the apparent randomness they induce in a given step. As an illustrative example, consider an analogue of the random walk on the line described above, but realised on a cyclic graph with four vertices (i.e., moving to the right of node~4 brings the walker to node~1). In Sec.~\ref{sec:uni}, we prove that every matrix corresponding to such a walk with $p_+=p_-$ is unistochastic, and so there always exists a quantum walk of that type. Now, if we measure the walker's position after every step (effectively rendering its dynamics classical), it quickly spreads over the whole graph, i.e., independently of the starting position its final position quickly becomes uniformly distributed over all four nodes. However, if we do not observe the walker and let it evolve over $N$ steps before measuring its position, we will never find the walker at node 3~if it started at node~1 (and vice verse), as well as we will never find it at node~2 if it started at node~4 (and vice versa). This kind of behaviour is based on the reversibility of the underlying unitary dynamics and thus can only be observed for unistochastic processes.

Additional incentive motivating our study of unistochastic matrices comes from the deep connection to particle physics, more concretely from the violation of charge conjugation parity symmetry (CP symmetry). Such violation was first observed experimentally and later explained by introducing the third family of quarks. Even though mixing of quarks is described by a unitary Cabibbo-Kobayashi-Maskawa (CKM) matrix~\cite{BS09}, what we access experimentally is a bistochastic matrix built of probabilities~\cite{Di06}. Indeed, existence of only two families of quarks would correspond to CKM matrices of dimension 2 -- but all bistochastic matrices of this dimension are also unistochastic, and that could not lead to CP violation. Thus, the violation is explained by the introduction of dimension 3 unitary CKM matrices, leading to 3 families of quarks~\cite{jarlskog}. Notwithstanding, it is of theoretical importance to study even higher-dimensional unistochastic sets, if only to explore possibilities of more quark families. Note that higher-dimensional problems are interesting also from the point of view of scattering theory~\cite{nuyts_1974}. 

Moreover, while studying quantum foundations researchers often consider the problem of transition probabilities, which leads to bistochastic matrices. In order to introduce group structure, it is preferable to study the unistochastic subset because of its natural group arrangement~\cite{lande, rovelli}. Finally, as recently observed in Ref.~\cite{korzekwa2020quantum}, unistochastic matrices are interesting from the perspective of quantum memory advantages, as they can be generated by memoryless quantum dynamics, even though classically they may be non-Markovian. All things considered, the unistochasticity problem is deeply embedded in a wide variety of topics in physics~\cite{Be04}.

The main aim of this work is to contribute to our understanding of the structure of the set $\U_d$ of unistochastic matrices and to analyse possible dynamics induced by this set, motivated by applications in quantum physics. To achieve this, we introduce the set $\L_d$ of \emph{bracelet matrices} that includes the analysed set $\U_d$ and approximates it from outside. In fact, in certain low-dimensional cases these two sets coincide, and so the properties of $\L_d$ directly translate into the properties of $\U_d$. Based on numerical evidence, we conjecture that the set of bracelet matrices forms monoid and prove a slightly weaker algebraic property. Namely, we show that $\L_d$ is closed under matrix multiplication by elements of its subset $\F_d$ of factorisable matrices~\cite{poon1987inclusion}. Moreover, we prove that both $\L_d$ and $\F_d$ are star-shaped with respect to the Van der Waerden matrix $W_d$, contributing to the analysis of star-shapedness of $\U_d$ for higher dimensions (for $d=3$ this property has been observed previously in Ref~\cite{bengtsson2005birkhoff}). We also investigate \emph{doubly circulant} unistochastic matrices (i.e., these unistochastic matrices that arise from circulant unitary dynamics) and characterise their spectra. Finally, our general results allow us to solve the problem of unistochasticity for circulant matrices of order $d=4$, and demonstrate that for $d=3$ the set of circulant unistochastic matrices forms a monoid.
 
The paper is organised as follows. First, in Sec.~\ref{sec:preliminaries}, we provide a mathematical background for our work, review some earlier results and introduce the notion of bracelet matrices. Then, in Sec.~\ref{sec:bracelet}, we characterise the algebraic and geometric structure of the set $\L_d$ of bracelet matrices. In Sec.~\ref{sec:uni} we employ the general results on bracelet matrices to investigate the structure of unistochastic matrices. Finally, Sec.~\ref{sec:conclusions} contains conclusions and outlook for future work.

% -------------------------------------------------------------
% SEC. II - SETTING THE SCENE
% -------------------------------------------------------------

\section{Mathematical preliminaries}
\label{sec:preliminaries}

In this section we provide rigorous mathematical definitions of the investigated subsets of the Birkhoff polytope and describe their known properties. We illustrate the relation between some of these subsets in Fig.~\ref{fig:sets}.

\begin{definition}[Bistochastic matrices]
	The set of $d\times d$ bistochastic matrices $\B_d$ consists of all matrices $B$ whose entries are non-negative and satisfy
	\begin{equation}
		\sum_{j=1}^d B_{jk}=\sum_{k=1}^d B_{jk}=1.
	\end{equation}
\end{definition}

The Birkhoff polytope $\B_d$ forms a convex body in ${\mathbbm R}^{(d-1)^2}$, and its volume with respect to the Euclidean measure is explicitly known for small dimensions~\cite{CR99, DLLY09} and asymptotically for large $d$~\cite{CMK07}. To generate random bistochastic matrices that cover the entire set $\B_d$ one can use the Sinkhorn algorithm~\cite{Si64}, but the measure induced in this way is not uniform in the polytope~\cite{CSBZ09}.

The simplest bistochastic matrices, known as \emph{elementary bistochastic matrices}~\cite{marcus1984products}, act non-trivially on at most a two-dimensional subspace.
\begin{definition}[Elementary bistochastic matrices]
	A~\mbox{$d\times d$} bistochastic matrix $B$ is elementary if $B_{kk}=1$ for at least $d-2$ indices $k$.
\end{definition}
We note that these matrices, which contain at most $d+2$ non-zero entries, are also known in the literature under alternative names of $T$-transforms~\cite{bhatia2013matrix} and pinching matrices~\cite{poon1987inclusion}. Of special interest for us will be the set of matrices that can be decomposed as a product of elementary matrices. 
\begin{definition}[Factorisable bistochastic matrices]
	The closed set of $d\times d$ factorisable bistochastic matrices $\F_d$ consists of all matrices $B$ that can be expressed as a product of elementary bistochastic matrices, 
	\begin{equation}
		B=\prod_{k=1}^n A^{(k)},
	\end{equation}
	with $A^{(k)}$ being elementary and an integer $n$ being possibly infinite.
\end{definition}
The relation of factorisable matrices to the problem of unistochasticity was investigated in Ref.~\cite{poon1987inclusion}, while more recently such matrices were employed in the studies of quantum thermodynamics~\cite{lostaglio2018elementary} and Markovian quantum dynamics~\cite{korzekwa2020quantum}.

The main interest of our research lies in the set $\U_d$ of unistochastic matrices defined as follows.

\begin{definition}[Unistochastic matrices]
	\label{def:uni}
	The set of \mbox{$d\times d$} unistochastic matrices $\U_d$ consists of all bistochastic matrices $B$ whose entries satisfy $B_{jk} = |U_{jk}|^2$ for some unitary matrix $U$.
\end{definition}

\begin{figure}
	\begin{tikzpicture}
	\node (myfirstpic) at (0,0) {\includegraphics[width=0.8\columnwidth]{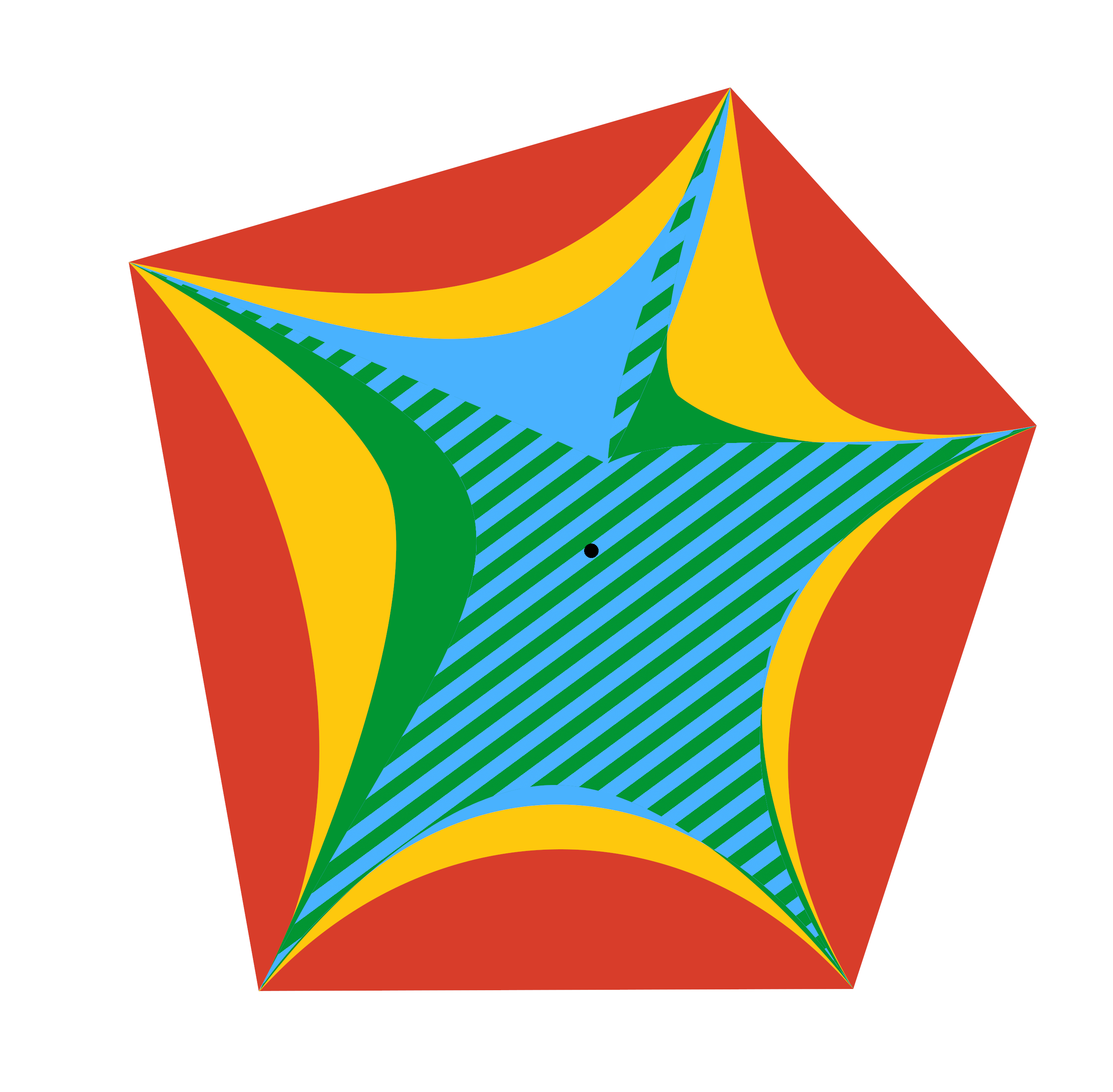}};		
	\node at (0.12\columnwidth,0.15\columnwidth) {\color{black}$\L_d$};
	\node at (0.25\columnwidth,-0.34\columnwidth) {\color{black}$\iden_d$};
	\node at (0.38\columnwidth,0.1\columnwidth) {\color{black}$\Pi_d$};
	\node at (0\columnwidth,-0.27\columnwidth) {\color{black}$\B_d$};		
	\node at (-0.085\columnwidth,-0.0\columnwidth) {\color{black}$\U_d$};			
	\node at (0.06\columnwidth,0.0\columnwidth) {\color{black}$W_d$};	
	\node at (-0.0\columnwidth,-0.13\columnwidth) {\color{black}$\F_d$};	
	\end{tikzpicture}
	\caption{\label{fig:sets} \textbf{Families of investigated transition matrices.} The inclusion relations for generic dimension $d>3$ are given by $\U_d \subset \L_d \subset \B_d$ with $\F_d \subset \L_d$, but with non-comparable sets $\U_d$ and $\F_d$. Note that for some dimensions, e.g. \mbox{$d=4$}, the flat matrix $W_d$ lies at the boundary of $U_d$~\cite{bengtsson2005birkhoff}, and that the investigated sets of circulant and doubly circulant unistochastic matrices, $\C_d$ and $\C_d^2$, are not presented in the picture.}
\end{figure}

The above definition can be treated as a mapping from the $d^2$ dimensional set of unitary matrices down to the $(d-1)^2$ dimensional set $\B_d$ of bistochastic matrices~\cite{Ka08}. It is often convenient to write a unistochastic matrix $B_U\in \U_d$ as a Hadamard (entry-wise) product, $B_U=U\circ {\bar U}$.
If there exists an orthogonal matrix $O$ such that $B=O\circ {O}$, the matrix $B$ is called {\sl orthostochastic} and the set of such matrices will be denoted by ${\cal O}_d$, even though in some mathematical works~\cite{poon1987inclusion,AYC91,Na96} this name refers also to unistochastic matrices.

For $d=2$ any bistochastic matrix is unistochastic and also orthostochastic, so that these three sets are equal, $\U_2=\B_2={\cal O}_2$. To illustrate the fact that for larger dimensions this is not the case, consider the following bistochastic matrix of order $d=3$,
\begin{equation}
	\label{eq:Q_matrix}
	Q = \frac{1}{2}\begin{pmatrix}
		0&1 &1 \\
		1 &0 &1 \\
		1 &1 &0 
	\end{pmatrix}.
\end{equation}
The corresponding unitary matrix must then be of the form
\begin{equation}
	U_3=\frac{1}{\sqrt{2}}\begin{pmatrix}
		0&e^{i\phi_1}&e^{i\phi_2} \\
		e^{i\phi_3} &0&e^{i\phi_4} \\
		e^{i\phi_5} &e^{i\phi_6}&0
	\end{pmatrix}.
\end{equation}
Orthogonality of the first and second rows requires the vanishing of $e^{i(\phi_2-\phi_4)}$, which is obviously not possible for any angles $\phi_i$. Therefore, $Q$ can serve as a simple example of a bistochastic matrix, which is not unistochastic. In the case $d=3$ it is known that the Euclidean volume of the set $\U_3$ occupies $8\pi^2/105\approx 0.752$ of the volume of the Birkhoff polytope $\B_3 \subset {\mathbbm R}^4$~\cite{DZ09}. More generally, for every dimension $d \ge 3$ the set $\U_d$ forms a proper subset of $\B_d$. Moreover, geometry of $\U_d$ is more involved than that of $\B_d$, as the set $\U_d$ is not convex. This can also be seen by the above example: simply note that all permutations are unistochastic and that $Q$ is a convex combination of two permutations. Note, however, that it was shown that convex combinations of a particular class of {\sl complementary} permutation matrices are unistochastic~\cite{AYC91}.
 
While analysing the structure of the set $\U_d$, we will also focus on a particular family of circulant unistochastic matrices that correspond to translation-invariant random processes.

\begin{definition}[Circulant matrix]
	The set of \mbox{$d\times d$} circulant matrices $\C_d$ consists of all matrices $B$ whose entries satisfy~\cite{Davis79} 
	\begin{equation}
		B_{j+l,k}=B_{j,k-l},
	\end{equation} 
	where the addition and subtraction is modulo $d$. In other words, the $k$-th row vector is given by the first row vector translated to the right by $(k-1)$.
\end{definition}

As a bistochastic matrix belongs to the $(d-1)^2$-dimensional convex hull of $d!$ permutation matrices, a circulant bistochastic matrix belongs to its $(d-1)$-dimensional section determined by the convex hull of $d$ cyclic permutations matrices. Hence, any circulant bistochastic matrix $C$ can be expressed as a convex combination of powers of the cyclic permutation matrix $\Pi_d$,
\begin{equation}
	\label{ Bcyclic}
	C = \sum_{j=1}^{d}   \alpha_j  \Pi_d^j,\qquad \Pi_d=\left(%
	\begin{array}{ccccc}
		0&1&0&\dots&0\\
		0&0&1&\dots&0\\
		\vdots&\vdots&\vdots&\ddots&1\\
		1&0&0&\dots&0
	\end{array}\right).
\end{equation}
Circulant bistochastic matrices define an interesting class of classical Markov chains, and can also be used to construct circulant completely positive maps and circulant quantum Markov semigroups~\cite{BSQ13,BSC14,BSCQ19}, while circulant unistochastic matrices were analysed in Ref.~\cite{Sm15}.

Note that the fact that a given unistochastic matrix $B$ is circulant does not necessarily mean that there exists an underlying unitary matrix $U$ that is circulant. Of course, the amplitudes of such a unitary matrix must be circulant, but it is not clear whether the phases can be chosen in a way that makes $U$ circulant. Therefore, we introduce the following notion of a doubly circulant unistochastic matrix.
\begin{definition}[Doubly circulant unistochastic matrices]
	The set of \mbox{$d\times d$} doubly circulant unistochastic matrices $\C^2_d$ consists of these circulant unistochastic matrices for which there exists the corresponding unitary matrix that is circulant. 
\end{definition}

Finally, we also introduce and study the properties of the set of \emph{bracelet matrices} $\L_d$ defined as follows.
\begin{definition}[Bracelet matrices]
	\label{def:bracelet}
	The set of \mbox{$d\times d$} bracelet matrices $\L_d$ consists of these bistochastic matrices $B$ whose entries satisfy:
	\begin{subequations}
		\begin{align}
		\label{eq:bracelet1}
		2\max_j \sqrt{B_{jk}B_{jl}}&\leq \sum_{j=1}^d \sqrt{B_{jk}B_{jl}},\\
		\label{eq:bracelet2}
		2\max_l \sqrt{B_{jl}B_{kl}}&\leq \sum_{l=1}^d \sqrt{B_{jl}B_{kl}},
		\end{align}
	\end{subequations}	
	which are called column and row bracelet conditions.
\end{definition}
As we explain below, the name refers to the fact that when inequalities from Eqs.~\eqref{eq:bracelet1}-\eqref{eq:bracelet2} are satisfied, a certain sequence of segments with specified lengths can be closed to form a polygon (a ``bracelet'') in the complex plane. The main reason to investigate bracelet matrices comes from the fact that they form a superset of unistochastic matrices, and thus we can use them as a tool to study the properties of $\U_d$. To see that every matrix $B\in\U_d$ also belongs to $\L_d$, recall that pairs of rows and columns of a unistochastic matrix $B$ are constrained due to the orthogonality of the rows and columns of the corresponding unitary matrix $U$. Noting that the entries of $U$ are of the form
\begin{equation}
	U_{jk}=\sqrt{B_{jk}} e^{i\phi_{jk}},
\end{equation}
the orthogonality conditions read
\begin{subequations}
	\begin{align}
		\sum_{j=1}^d \sqrt{B_{jk}B_{jl}} e^{i(\phi_{jk}-\phi_{jl})}&=0,\\
		\sum_{l=1}^d \sqrt{B_{jl}B_{kl}} e^{i(\phi_{jl}-\phi_{kl})}&=0.
	\end{align}
\end{subequations}
Now, the necessary conditions for the vanishing of the above sums is that the amplitude of the largest element in each sum is smaller than the sum of the remaining amplitudes. Geometrically, this can be understood as a requirement to form a closed polygon in the complex plane out of $d$ segments of length $\sqrt{B_{jk}B_{jl}}$ (or $\sqrt{B_{jl}B_{kl}}$) each. This is clearly impossible if the longest segment is longer than the total length of all the remaining segments. Equations~\eqref{eq:bracelet1}-\eqref{eq:bracelet2} in Definition~\ref{def:bracelet} correspond precisely to these necessary conditions, and so we conclude that if a matrix $B$ is unistochastic, it must satisfy \mbox{$2 {d\choose 2} = d(d-1)$} \emph{bracelet conditions} specified by Eqs.~\eqref{eq:bracelet1}-\eqref{eq:bracelet2}. Thus, whenever $B\in\U_d$, we also have $B\in\L_d$. Finally, it should be added here that the above bracelet conditions, 
Eqs.~\eqref{eq:bracelet1}-\eqref{eq:bracelet2}, that are necessary for unistochasticity, were also called chain-link conditions in earlier mathematical~\cite{AYP79, Na96} and physical~\cite{JS88, PKZ01} literature.

% -------------------------------------------------------------
% SEC. III - STRUCTURE OF THE SET OF BRACELET MATRICES
% -------------------------------------------------------------

\section{Structure of the set of bracelet matrices}
\label{sec:bracelet}

\subsection{Algebraic properties}

Our crucial result that specifies the algebraic properties of $\L_d$ is given by the following theorem.
\begin{theorem}[Structure of $\L_d$]
	\label{thm:bracelet}
	The set of bracelet matrices $\L_d$ is closed under multiplication by factorisable matrices:
	\begin{equation}
	A\in\F_d\mathrm{~and~}B\in\L_d \Rightarrow AB\in\L_d\mathrm{~and~}BA\in\L_d.
	\end{equation}
	As a result, every factorisable bistochastic matrix is also bracelet:
	\begin{equation}
		B\in \F_d \Rightarrow B\in \L_d.
	\end{equation} 
	
\end{theorem}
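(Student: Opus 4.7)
The plan is to reduce everything to the case of a single elementary matrix $A$ acting on the left of some $B\in\L_d$. Once $AB\in\L_d$ is established for elementary $A$, finite products in $\F_d$ follow by induction, and infinite products by the topological closure of $\L_d$ (which is defined by weak inequalities in continuous functions of the entries). Right multiplication $BA$ reduces to the left case via $(BA)^T = A^T B^T$, since elementary matrices are symmetric and the bracelet conditions in Eqs.~\eqref{eq:bracelet1}--\eqref{eq:bracelet2} simply interchange under transposition. Finally, the inclusion $\F_d\subseteq\L_d$ drops out as an immediate corollary: $\iden_d$ trivially lies in $\L_d$ (all relevant sums vanish identically), so $A = A\cdot\iden_d\in\L_d$ for every $A\in\F_d$.

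Now fix an elementary $A$ mixing rows $p,q$ with parameter $a\in[0,1]$, so that $(AB)_{jk}=B_{jk}$ for $j\notin\{p,q\}$, while $(AB)_{pk}=aB_{pk}+(1-a)B_{qk}$ and $(AB)_{qk}=(1-a)B_{pk}+aB_{qk}$. To verify the column bracelet condition on any pair of columns $(k,l)$, set $t_j=\sqrt{B_{jk}B_{jl}}$ and $s_j=\sqrt{(AB)_{jk}(AB)_{jl}}$; only $s_p,s_q$ differ from their $B$-counterparts. Direct expansion yields the identity $s_p^2-s_q^2=(2a-1)(t_p^2-t_q^2)$, while concavity of the geometric mean gives $s_p+s_q\geq t_p+t_q$. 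These combine to the crucial bound $|s_p-s_q|\leq |t_p-t_q|$, and the column bracelet of $B$ itself forces $|t_p-t_q|\leq\sum_{j\notin\{p,q\}}t_j$. Together with $s_p+s_q\geq t_p+t_q$, this yields $2\max_j s_j\leq\sum_j s_j$ regardless of whether the maximum is attained inside or outside $\{p,q\}$.

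The row bracelet needs slightly more care. For row $p$ (or $q$) paired with a row $k\notin\{p,q\}$, I write $\sqrt{(AB)_{pl}(AB)_{kl}}=\|v_l\|_2$ with $v_l=(\sqrt{a}\sqrt{B_{pl}B_{kl}},\sqrt{1-a}\sqrt{B_{ql}B_{kl}})$ viewed in $\mathbb{R}^2_{\geq 0}$; Minkowski's inequality $\sum_{l\neq l^*}\|v_l\|_2\geq\|\sum_{l\neq l^*}v_l\|_2$, combined with the coordinatewise row bracelet conditions of $B$, dominates $\|v_{l^*}\|_2$, which is precisely the inequality required for $AB$. The remaining pairing is row $p$ against row $q$, which hinges on the key algebraic identity $(AB)_{pl}(AB)_{ql}=B_{pl}B_{ql}+a(1-a)(B_{pl}-B_{ql})^2$; this realises the square root as $\|w_l\|_2$ for $w_l=(\sqrt{B_{pl}B_{ql}},\sqrt{a(1-a)}\,|B_{pl}-B_{ql}|)\in\mathbb{R}^2_{\geq 0}$. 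Applying Minkowski once more and using the auxiliary bound $2\max_l|B_{pl}-B_{ql}|\leq\sum_l|B_{pl}-B_{ql}|$, which follows from $\sum_l B_{pl}=\sum_l B_{ql}=1$ by equating positive and negative parts of $B_{pl}-B_{ql}$, closes the argument.

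The principal obstacle I anticipate is exactly this last case of the row bracelet with both rows being the ones mixed by $A$. Once the quadratic identity above is spotted and the sign issue in $|B_{pl}-B_{ql}|$ is absorbed via $\sum_l(B_{pl}-B_{ql})^+=\sum_l(B_{pl}-B_{ql})^-$, the Minkowski--triangle-inequality machinery takes over cleanly; without that algebraic shortcut, a direct inequality manipulation becomes substantially less transparent and the role played by the bistochasticity of $B$ harder to isolate.
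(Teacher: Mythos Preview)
Your proof is correct and takes a genuinely different route from the paper's. Both arguments reduce to showing $EB\in\L_d$ for a single elementary $E$ and then handle the column and row bracelet conditions separately, but the technical core differs substantially.

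For the column condition, the paper proves (its Lemma~\ref{lem:columns}) that the function $f(t)=\sqrt{\tilde p_1\tilde q_1}-\sqrt{\tilde p_2\tilde q_2}$ is monotone in $t$ via an explicit calculus computation. Your algebraic shortcut $s_p^2-s_q^2=(2a-1)(t_p^2-t_q^2)$ combined with the fidelity monotonicity $s_p+s_q\geq t_p+t_q$ yields $|s_p-s_q|\leq|t_p-t_q|$ without any calculus, which is cleaner. For the row condition with one mixed row and one unmixed row, the paper's Lemma~\ref{lem:rows2} uses the pointwise bound $\sqrt{a+b}\geq\sqrt{xa}+\sqrt{(1-x)b}$ and then optimises over $x$; your Minkowski formulation via $v_l\in\mathbb R^2_{\geq0}$ is essentially the same idea in different clothing. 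The real divergence is in the row~$p$ versus row~$q$ case: the paper (Lemma~\ref{lem:rows}) introduces an auxiliary $2\times2$ matrix $X$ and runs a Cauchy--Schwarz argument that requires a somewhat delicate choice of the diagonal entries of $X$, whereas your quadratic identity $(AB)_{pl}(AB)_{ql}=B_{pl}B_{ql}+a(1-a)(B_{pl}-B_{ql})^2$ lets Minkowski do all the work once one observes that bistochasticity forces $2\max_l|B_{pl}-B_{ql}|\leq\sum_l|B_{pl}-B_{ql}|$.

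In short: the paper's argument is more ad hoc (three lemmas with three distinct techniques, one of them calculus-based), while yours is more unified (fidelity monotonicity for columns, a single Minkowski template for rows) and isolates the role of bistochasticity more transparently. Both are complete.
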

Before we present the proof of the above theorem, let us first state a stronger conjecture that is supported by our numerical investigations:
\begin{conjecture}[Monoid structure of $\L_d$]
	\label{conj:bracelet}
	The set of bracelet matrices $\L_d$ forms a monoid, i.e., it is closed under matrix multiplication and contains the identity matrix $\iden_d$.
\end{conjecture}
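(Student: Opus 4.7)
The identity matrix $\iden_d$ belongs to $\L_d$ almost trivially: for any $k\neq l$ and any row $j$ the product $\iden_{jk}\iden_{jl}=0$, so both sides of Eq.~\eqref{eq:bracelet1} vanish, and analogously for Eq.~\eqref{eq:bracelet2}. The substantive task is therefore to prove closure under matrix multiplication, i.e.\ that $A,B\in\L_d$ implies $AB\in\L_d$. The plan is to extend the strategy behind Theorem~\ref{thm:bracelet}, which already handles the case $A\in\F_d$, by dropping the factorisability hypothesis. A purely approximative route is unavailable because $\F_d$ is strictly smaller than $\L_d$, so a genuinely new ingredient is required.

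The most direct approach is to verify the column bracelet condition on $C=AB$ directly. Writing $C_{jk}=\sum_m A_{jm}B_{mk}$ and applying Cauchy--Schwarz with the probability weights $A_{jm}$ to the vectors $\sqrt{B_{mk}}$ and $\sqrt{B_{ml}}$ yields
\begin{equation}
\sqrt{C_{jk}C_{jl}} \;\geq\; \sum_m A_{jm}\sqrt{B_{mk}B_{ml}}.
\end{equation}
Summing over $j$ and using column-stochasticity of $A$ then gives $\sum_j\sqrt{C_{jk}C_{jl}}\geq\sum_m\sqrt{B_{mk}B_{ml}}$, which is a useful lower bound on the right-hand side of Eq.~\eqref{eq:bracelet1}. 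What remains, and is the core of the difficulty, is to obtain a matching upper bound on $\max_j\sqrt{C_{jk}C_{jl}}$ from the bracelet conditions on $A$ and $B$. I would look for a reverse Cauchy--Schwarz-type estimate exploiting the row bracelet condition on $A$, perhaps by grouping the indices $m$ according to the largest-contribution term in the corresponding polygon for $B$, so that the row constraints of $A$ can be used to bound the dominant contribution to $C_{jk}C_{jl}$ at the extremal row $j$.

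An alternative, more structural, route is to exploit the geometric content of Definition~\ref{def:bracelet}: the column condition on $B$ is equivalent to the existence of phases $\theta^{(k,l)}_m$ with $\sum_m\sqrt{B_{mk}B_{ml}}\,e^{i\theta^{(k,l)}_m}=0$, and similarly for the row condition on $A$. One could attempt to construct closing phases for each pair of columns of $C$ as a suitable combination of the closing phases for $A$ and $B$, taking advantage of the fact that $\sqrt{C_{jk}C_{jl}}$ admits a polar representation naturally derived from overlaps of rows of $A$ with columns of $B$.

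The main obstacle, and the reason the statement is only conjectured, is precisely the weakness of the bracelet property compared with unistochasticity: the bracelet conditions guarantee only that each pairwise orthogonality equation is individually solvable, not that a single coherent assignment of phases exists. Matrix multiplication couples rows of $A$ with columns of $B$, so the local (pairwise) phase freedoms on each side must be reconciled. Any successful proof will likely need to either isolate a subtler invariant on $\L_d$ that passes cleanly through products, or identify an intermediate family sitting between $\F_d$ and $\L_d$ through which Theorem~\ref{thm:bracelet} can be bootstrapped.
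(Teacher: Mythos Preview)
The paper does not contain a proof of this statement. It is explicitly stated as a \emph{conjecture}, supported only by numerical evidence; the authors prove the strictly weaker Theorem~\ref{thm:bracelet} (closure of $\L_d$ under multiplication by elements of $\F_d$) and leave the full monoid property open. So there is no ``paper's own proof'' to compare against.

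Your write-up is consistent with this: you do not actually claim to prove closure, and you correctly identify the obstruction. The trivial part (that $\iden_d\in\L_d$) is fine. Your Cauchy--Schwarz lower bound $\sum_j\sqrt{C_{jk}C_{jl}}\geq\sum_m\sqrt{B_{mk}B_{ml}}$ is valid and is exactly the ``easy half''; the missing half is the control of $\max_j\sqrt{C_{jk}C_{jl}}$, and neither the row bracelet condition on $A$ nor any reverse Cauchy--Schwarz argument you sketch is known to close this gap. The ``closing phases'' idea in your second approach runs into precisely the global-coherence problem you flag at the end: bracelet conditions are pairwise and do not furnish a consistent phase assignment that would let you transport orthogonality through the product.

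In short, your proposal is not a proof but a reasonable survey of attack lines, and the paper itself has nothing more --- the statement remains open.
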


For the proof of Theorem~\ref{thm:bracelet} we will need three lemmas. These concern pairs of vectors (effectively rows and columns) rather then full bistochastic matrices, but as the bracelet conditions refer to pairs of rows and columns, the result for matrices follows almost straightforwardly from the ones for vectors. First, let us introduce the notion of bracelet conditions for vectors.
\begin{definition}
	We say that two vectors \mbox{$\ket{p}=[p_1,\dots,p_d ]$} and \mbox{$\ket{q}=[q_1,\dots,q_d]$} satisfy the bracelet conditions if and only if
	\begin{equation}
		\sqrt{p_k q_k} \leq \sum_{j\neq k}\sqrt{p_j q_j}
	\end{equation}
	for all $j$. We shall denote this as
	\begin{equation}
		\{\ket{p},\ket{q}\} \in L_d.    
	\end{equation}	
\end{definition}

Now, the first lemma is given as follows.
\begin{lemma}
	\label{lem:columns}
	Let $\ket{p}$ and $\ket{q}$ be vectors with non-negative entries, such that $\{\ket{p},\ket{q}\}\in L_d$. Then, for an elementary bistochastic $E$ we have $\{E\ket{p},E\ket{q}\}\in L_d$.
\end{lemma}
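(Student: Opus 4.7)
The plan is to parametrise an elementary bistochastic matrix $E$ by the two indices $a, b$ on which it acts non-trivially and a mixing parameter $t \in [0, 1]$, so that for any non-negative vector $\ket{v}$ the image $\ket{v'}:=E\ket{v}$ satisfies $v'_a = (1-t)v_a + t v_b$, $v'_b = t v_a + (1-t)v_b$, and $v'_k = v_k$ for $k \notin \{a, b\}$. The goal is then to verify the bracelet inequality $\sqrt{p'_k q'_k} \leq \sum_{j \neq k} \sqrt{p'_j q'_j}$ for every $k$, splitting into the cases $k \notin \{a,b\}$ and $k \in \{a,b\}$.

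For $k \notin \{a, b\}$ the left-hand side is unchanged, $\sqrt{p'_k q'_k} = \sqrt{p_k q_k}$, and on the right-hand side only the two summands indexed by $a$ and $b$ are affected. The key observation is that the map $(x, y) \mapsto \sqrt{xy}$ is concave on $\mathbb{R}_+^2$ (equivalently, one applies Cauchy-Schwarz to the vectors $(\sqrt{(1-t)p_a},\sqrt{t p_b})$ and $(\sqrt{(1-t)q_a},\sqrt{t q_b})$, and similarly for the $b$-th component), which yields
\begin{equation}
\sqrt{p'_a q'_a} + \sqrt{p'_b q'_b} \;\geq\; \sqrt{p_a q_a} + \sqrt{p_b q_b}.
\end{equation}
Thus the right-hand side of the bracelet inequality does not decrease, and the inequality at such $k$ follows from the original one for $\{\ket{p},\ket{q}\}$.

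For $k \in \{a, b\}$ it suffices by symmetry to treat $k = a$, where the desired inequality rearranges to $\sqrt{p'_a q'_a} - \sqrt{p'_b q'_b} \leq \sum_{j \notin \{a, b\}} \sqrt{p_j q_j}$. A direct expansion gives the identity $p'_a q'_a - p'_b q'_b = (1-2t)(p_a q_a - p_b q_b)$, and rationalising via $\sqrt{x}-\sqrt{y}=(x-y)/(\sqrt{x}+\sqrt{y})$ in combination with the lower bound from the previous step produces the contraction estimate
\begin{equation}
\bigl|\sqrt{p'_a q'_a} - \sqrt{p'_b q'_b}\bigr| \;\leq\; |1-2t|\,\bigl|\sqrt{p_a q_a} - \sqrt{p_b q_b}\bigr| \;\leq\; \bigl|\sqrt{p_a q_a} - \sqrt{p_b q_b}\bigr|.
\end{equation}
Applying the original bracelet conditions at indices $a$ and $b$ simultaneously gives $|\sqrt{p_a q_a} - \sqrt{p_b q_b}| \leq \sum_{j \notin \{a, b\}} \sqrt{p_j q_j}$, which closes the argument.

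The main obstacle is the case $k \in \{a, b\}$: neither side of the bracelet inequality is preserved by the action of $E$, so one must actively control the difference $|\sqrt{p'_a q'_a} - \sqrt{p'_b q'_b}|$ in terms of the corresponding difference for $\ket{p}, \ket{q}$. The essential input is the two-term algebraic identity $p'_a q'_a - p'_b q'_b = (1-2t)(p_a q_a - p_b q_b)$, which combines cleanly with the concavity-driven lower bound on $\sqrt{p'_a q'_a} + \sqrt{p'_b q'_b}$ established in the easy case; once one recognises this structure, the remainder is bookkeeping with the original bracelet hypotheses.
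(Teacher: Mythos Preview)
Your proof is correct. For indices $k\notin\{a,b\}$ your argument coincides with the paper's: both use that the Bhattacharyya sum $\sqrt{p_aq_a}+\sqrt{p_bq_b}$ can only increase under a bistochastic mixing of the two entries (the paper phrases this as monotonicity of the fidelity, you phrase it as concavity/Cauchy--Schwarz).

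The genuine difference lies in the case $k\in\{a,b\}$. The paper fixes $p_1q_1\geq p_2q_2$, defines $f(t)=\sqrt{\tilde p_1\tilde q_1}-\sqrt{\tilde p_2\tilde q_2}$, and shows by explicit differentiation that $f$ is monotone on $[1/2,1]$; this requires computing $\mathrm{d}x/\mathrm{d}t$, $\mathrm{d}y/\mathrm{d}t$ and splitting into sub-cases according to the signs of $p_1-p_2$ and $q_1-q_2$. Your route bypasses all of this: the algebraic identity $p'_aq'_a-p'_bq'_b=(1-2t)(p_aq_a-p_bq_b)$, combined via rationalisation with the already-established lower bound on $\sqrt{p'_aq'_a}+\sqrt{p'_bq'_b}$, yields the contraction
\[
\bigl|\sqrt{p'_aq'_a}-\sqrt{p'_bq'_b}\bigr|\le|1-2t|\,\bigl|\sqrt{p_aq_a}-\sqrt{p_bq_b}\bigr|
\]
in one line, with no calculus and no case distinctions. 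This is strictly more elementary and in fact gives a quantitative contraction factor $|1-2t|$ that the paper's monotonicity argument does not isolate.
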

\begin{proof}
	Without loss of generality we may assume \mbox{$E_{11}=E_{22}=t$}, \mbox{$E_{12}=E_{21}=1-t$}, $E_{kk}=1$ for all $k\geq 3$, and all other entries of $E$ being zero. Similarly, we may assume that \mbox{$p_1q_1\geq p_2 q_2$}. Let us denote $\ket{\tilde{p}}=E\ket{p}$ and $\ket{\tilde{q}}=E\ket{q}$. Then, for $k\geq 3$ we have
	\begin{align}
		\sqrt{\tilde{p}_k\tilde{q}_k}=&\sqrt{p_k q_k}\leq \sum_{j\neq k} \sqrt{p_j q_j}
		=\sqrt{p_1q_1}+\sqrt{p_2q_2}\nonumber\\
		&+\!\!\!\!\!\sum_{j\notin\{1,2,k\}} \!\!\!\sqrt{\tilde{p}_j \tilde{q}_j}\leq\sum_{j\neq k} \sqrt{\tilde{p}_j \tilde{q}_j},
	\end{align}
	where the first inequality comes from the assumption that $\ket{p}$ and $\ket{q}$ satisfy the bracelet condition; and the second inequality comes from the monotonicity of the fidelity function (Bhattacharyya coefficient) under stochastic processing by $E$ of the first two entries~\cite{watrous2018theory}.
	
	Thus, in order to prove that $\ket{\tilde{p}}$ and $\ket{\tilde{q}}$ satisfy the bracelet condition, we are left to demonstrate that
	\begin{subequations}
		\begin{align}
			\sqrt{\tilde{p}_1\tilde{q}_1}\leq& \sqrt{\tilde{p}_2\tilde{q}_2}+\sum_{j\geq 3} \sqrt{{p}_j{q}_j},\label{eq:bracelet_lem}\\
			\sqrt{\tilde{p}_2\tilde{q}_2}\leq& \sqrt{\tilde{p}_1\tilde{q}_1}+\sum_{j\geq 3} \sqrt{{p}_j{q}_j}.			
		\end{align}
	\end{subequations}
	It is straightforward to show that for $t>1/2$ only the first of the above conditions is non-trivial, while for $t<1/2$ only the second one is non-trivial (for $t=1/2$ both of the conditions are trivially fulfilled). We will only focus on the case $t>1/2$, as the other one is proven analogously. First, we lower bound the right hand side of Eq.~\eqref{eq:bracelet_lem} using the bracelet condition:
	\begin{equation}
		\sqrt{\tilde{p}_2\tilde{q}_2}+\sum_{j\geq 3} \sqrt{{p}_j{q}_j}\geq \sqrt{\tilde{p}_2\tilde{q}_2}+\sqrt{p_1q_1}-\sqrt{p_2q_2}.
	\end{equation}
	Then, in order for Eq.~\eqref{eq:bracelet_lem} to hold, it is enough to show that
	\begin{equation}
		\sqrt{p_1q_1}-\sqrt{p_2q_2}\geq \sqrt{\tilde{p}_1\tilde{q}_1}-\sqrt{\tilde{p}_2\tilde{q}_2}.
	\end{equation}
	For that it is sufficient to prove that the function
	\begin{equation}
		f(t):=\sqrt{\tilde{p}_1\tilde{q}_1}-\sqrt{\tilde{p}_2\tilde{q}_2}
	\end{equation}
	is monotonically increasing for \mbox{$t\in[1/2,1]$}. 
	
	From now on, the proof is based on elementary calculus. Calculating the derivative of $f$ over $t$ we get:
	\begin{equation}
		\frac{\mathrm{d}f}{\mathrm{d}t}=\frac{\frac{\mathrm{d}x}{\mathrm{d}t}}{2\sqrt{x}}-\frac{\frac{\mathrm{d}y}{\mathrm{d}t}}{2\sqrt{y}},
	\end{equation}
	where
	\begin{subequations}
	\begin{align}
		\!\!\!x(t):=&t^2 p_1q_1 \!+\!(1\!-\!t)^2 p_2q_2 \!+\!t(1\!-\!t)(p_1q_2+p_2q_1),\\
		\!\!\!y(t):=&t^2 p_2q_2 \!+\!(1\!-\!t)^2 p_1q_1 \!+\!t(1\!-\!t)(p_1q_2+p_2q_1).
	\end{align}
	\end{subequations}
	Note that $x\geq0$ and $y\geq 0$. Moreover, using the two assumptions \mbox{$p_1q_1\geq p_2q_2$} and $t>1/2$, we also have \mbox{$x\geq y$} and \mbox{$\mathrm{d}x/\mathrm{d}t\geq \mathrm{d}y/\mathrm{d}t$}. Thus, to prove that $f(t)$ is monotonically increasing for $t\in[1/2,1]$ we need to show that
	\begin{equation}
		\label{eq:monoton}
		\frac{\mathrm{d}x}{\mathrm{d}t}\sqrt{y}\geq \frac{\mathrm{d}y}{\mathrm{d}t}\sqrt{x}.
	\end{equation}

	There are two cases to be considered. First, if $p_1\geq p_2$ and $q_1\geq q_2$, using our two assumptions it is straightforward to show that $\mathrm{d}x/\mathrm{d}t\geq 0$ and $\mathrm{d}y/\mathrm{d}t\leq 0$. Thus, in that case Eq.~\eqref{eq:monoton} is trivially satisfied. The second case is when $p_1\geq p_2$ and $q_1\leq q_2$ (of course this is equivalent to $p_1\leq p_2$ and $q_1\geq q_2$; while the case $p_1\leq p_2$ and $q_1\leq q_2$ is inconsistent with our assumptions). A direct calculation employing the assumptions shows then that \mbox{$dy/dt\leq 0$}. Hence, if $\mathrm{d}x/\mathrm{d}t\geq 0$, Eq.~\eqref{eq:monoton} is trivially satisfied. Otherwise, both sides of Eq.~\eqref{eq:monoton} are negative, and so it is satisfied when the following is:
	\begin{equation}
		\left(\frac{\mathrm{d}x}{\mathrm{d}t}\right)^2 y\leq \left(\frac{\mathrm{d}y}{\mathrm{d}t}\right)^2 x.
	\end{equation}
	 Again, a direct calculation employing \mbox{$p_1q_1\geq p_2q_2$} and $t>1/2$ shows that the above is fulfilled.
\end{proof}

The second lemma is given as follows.

\begin{lemma}
	\label{lem:rows}
	For $t\in[0,1]$, let $\ket{p}=[t,1-t,0,\dots,0]$ and \mbox{$\ket{q}=[1-t,t,0,\dots,0]$} be vectors of size $d$. Then, for $B\in\L_d$ we have $\{B\ket{p},B\ket{q}\}\in L_d$.
\end{lemma}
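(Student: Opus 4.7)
The plan is to recast the quantity $\sqrt{(B\ket{p})_j (B\ket{q})_j}$ as the Euclidean norm of a carefully chosen vector in $\mathbb{R}^2$ and then deduce the bracelet inequality for the pair $\{B\ket{p},B\ket{q}\}$ from the ordinary triangle inequality in the plane. The two available pieces of information---the column bracelet condition on $B$ applied to columns $1$ and $2$, and the bistochastic row-sum constraint---will enter along two orthogonal coordinate axes.

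Writing $a_j := B_{j1}$ and $b_j := B_{j2}$, so that $(B\ket{p})_j = ta_j+(1-t)b_j$ and $(B\ket{q})_j = (1-t)a_j+tb_j$, the first step is to verify the algebraic identity
\begin{equation*}
\bigl(ta+(1-t)b\bigr)\bigl((1-t)a+tb\bigr) = (1-2t)^2\,ab + t(1-t)\,(a+b)^2.
\end{equation*}
Setting $p_j := \sqrt{a_j b_j}$, $r_j := a_j + b_j$, $\alpha := (1-2t)^2 \in [0,1]$, and defining
\begin{equation*}
v_j := \Bigl(\sqrt{\alpha}\,p_j,\;\tfrac{1}{2}\sqrt{1-\alpha}\,r_j\Bigr)\in\mathbb{R}^2,
\end{equation*}
the identity gives $\|v_j\|^2 = (B\ket{p})_j\,(B\ket{q})_j$, so the claim $\{B\ket{p},B\ket{q}\} \in L_d$ is exactly the statement $\|v_k\|\leq\sum_{j\neq k}\|v_j\|$ for all $k$.

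Next, the triangle inequality in $\mathbb{R}^2$, $\sum_{j\neq k}\|v_j\|\geq\|v_{\mathrm{tot}}-v_k\|$ with $v_{\mathrm{tot}}:=\sum_j v_j$, reduces the problem to showing $\|v_{\mathrm{tot}}-v_k\|\geq\|v_k\|$. Using $\sum_j r_j = \sum_j(B_{j1}+B_{j2}) = 2$, one computes $v_{\mathrm{tot}} = (\sqrt{\alpha}\,F,\,\sqrt{1-\alpha})$ with $F := \sum_j\sqrt{a_j b_j}$, and a short calculation gives
\begin{equation*}
\|v_{\mathrm{tot}}-v_k\|^2 - \|v_k\|^2 = \alpha\,F\,(F-2p_k) + (1-\alpha)(1-r_k).
\end{equation*}
Both summands are non-negative: $F \geq 2 p_k$ is the column bracelet condition of $B$ applied to columns $1$ and $2$ (which follows from $B\in\L_d$), while $r_k = B_{k1}+B_{k2}\leq 1$ is immediate from bistochasticity.

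The step I expect to be the main obstacle will be identifying the correct geometric recasting in the first place: the rewriting of $(B\ket{p})_j(B\ket{q})_j$ as $\alpha p_j^2 + \tfrac{1-\alpha}{4} r_j^2$ is precisely what decouples the two available constraints and permits the planar triangle-inequality argument. Naive alternatives---for instance, bounding $\sqrt{(B\ket{p})_j(B\ket{q})_j}$ pointwise by $\sqrt{a_j b_j} + \sqrt{t(1-t)}\,|a_j-b_j|$ and then invoking the bracelet and $\ell^1$-comparison separately---turn out to lose a $\sqrt{2}$ factor when recombined, and thus fail to close the required bracelet inequality.
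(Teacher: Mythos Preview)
Your proof is correct and considerably more direct than the paper's. The paper proceeds by writing $\sqrt{\tilde p_j\tilde q_j}=\bigl\|(\sqrt{B_{jk}B_{jl}p_kq_l})_{k,l}\bigr\|$ as a norm in $\mathbb{R}^4$ and applying Cauchy--Schwarz with a unit $2\times 2$ matrix $X$; it then takes $X^{(1)}$ to achieve equality at the maximal index and constructs, by hand, a modified $X$ for the remaining indices so that the resulting lower bound splits into three pieces, two handled by the column bracelet condition and one by a separate inequality $(1-B_{11})^2+(1-B_{12})^2\geq B_{11}^2+B_{12}^2$ coming from bistochasticity. Your route replaces this with the single identity $(ta+(1-t)b)((1-t)a+tb)=\alpha\,ab+\tfrac{1-\alpha}{4}(a+b)^2$, which yields an embedding into $\mathbb{R}^2$ whose two coordinate axes are \emph{aligned} with the two available constraints from the outset: the first coordinate feeds the column bracelet inequality $F\geq 2p_k$ and the second the row-sum bound $r_k\leq 1$. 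After that the planar triangle inequality finishes the job in one line. Both arguments ultimately exploit the same two ingredients, but your decomposition isolates them cleanly and avoids the ad~hoc choice of $X$; the paper's Cauchy--Schwarz framework is more flexible in principle, yet here that flexibility only adds length.
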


\begin{proof}
	Let us denote $\ket{\tilde{p}}=B\ket{p}$ and $\ket{\tilde{q}}=B\ket{q}$. Without loss of generality we may assume that \mbox{${\tilde{p}_1}{\tilde{q}_1} = \max_j 	{\tilde{p}_j}{\tilde{q}_j}$}, and thus to prove the lemma we need to show that 
	\begin{equation}
		\label{eq:aim}
		\sum_{j>1} \sqrt{{\tilde{p}_j}{\tilde{q}_j}}-\sqrt{{\tilde{p}_1}{\tilde{q}_1}} \geq 0.
	\end{equation}
	We begin by using the Cauchy-Schwarz inequality
	\begin{align}
		\sqrt{{\tilde{p}_j}{\tilde{q}_j}} &= \sqrt{\sum_{k,l=1}^2 B_{j k} B_{jl} p_k  q_l} \nonumber\\
		&\geq \sum_{k,l=1}^2 \sqrt{B_{j k} B_{j l} p_k  q_l} X_{kl},\label{eqn:pjpq-ineq}
	\end{align}
	which is satisfied for any $2 \times 2$ matrix $X$, such that $\tra{X^\dagger X}= 1$. 
	
	Now, let the matrix $X^{(1)}$ be chosen in such a way, that the Cauchy-Schwarz inequality in Eq.~\eqref{eqn:pjpq-ineq} is tight for \mbox{$j=1$}, i.e. 
	\begin{equation}
		\sqrt{{\tilde{p}_1}{\tilde{q}_1}} =
		\sum_{k,l=1}^2 \sqrt{B_{1 k} B_{1 l}  p_k  q_l} X_{kl}^{(1)}.
	\end{equation}
	Note, that $X^{(1)}$ has non-negative entries. Writing the above explicitly, we have
	\begin{align}
	\sqrt{\tilde{p}_1 \tilde{q}_1}
	&=
	t \left(
	\sqrt{B_{1 1} B_{1 2} }  X^{(1)}_{12}
	\right) \nonumber\\
	&+
	(1-t) \left(
	\sqrt{B_{1 2} B_{1 1}}  X^{(1)}_{21}
	\right) \nonumber\\
	&+\sqrt{t(1-t)}
	\left(
	B_{1 1}  X^{(1)}_{11}
	+	B_{1 2} X^{(1)}_{22}\right).
	\end{align}
	
	Next, let us consider a matrix $X$ defined by
	\begin{subequations}
	\begin{align}
		X_{11} &= 
		(1-B_{11})
		\frac{\sqrt{|X^{(1)}_{11}|^2+|X^{(1)}_{22}|^2}}
		{\sqrt{(1-B_{11})^2+(1-B_{12})^2}},\\
		X_{22} &= (1-B_{12})
		\frac{\sqrt{|X^{(1)}_{11}|^2+|X^{(1)}_{22}|^2}}
		{\sqrt{(1-B_{11})^2+(1-B_{12})^2}},		\\
		X_{12} &= X^{(1)}_{12}, \\
		X_{21} &= X^{(1)}_{21}.		
	\end{align}
	\end{subequations}
	Note, that $\tra{X X^\dagger} =1$. Since Eq.~\eqref{eqn:pjpq-ineq} holds for any normalised matrix, using matrix $X$ as defined above we get
	\begin{align}
	&\sum_{j>1} \sqrt{\tilde{p_j} \tilde{q_j}} - \sqrt{\tilde{p_1} \tilde{q_1}} \nonumber\\
	&\quad\geq  
	t X^{(1)}_{12} \left(
	\sum_{j>1} \sqrt{B_{j 1} B_{j 2}}  - \sqrt{B_{1 1} B_{1 2}}  
	\right) \nonumber\\
	&\quad+ 
	(1-t) X^{(1)}_{21} \left(
	\sum_{j>1} \sqrt{B_{j 2} B_{j 1}}  - \sqrt{B_{1 2} B_{1 1}}  
	\right) \nonumber\\
	&\quad+ \label{eqn:ch3}
	\sqrt{t(1-t)}
	\left(
	\sum_{j>1}\left( B_{j 1}  X_{11} + B_{j 2} X_{22} \right)\right.\nonumber\\
	&\quad\qquad\qquad\qquad\left.\phantom{\sum_{j>1}}-	B_{1 1}  X^{(1)}_{11} -	B_{1 2} X^{(1)}_{22}
	\right).
	\end{align}
	First, we note that the first two terms on the right hand side of the above inequality are non-negative due to the fact that $B\in\L_d$. To prove that $\{\ket{\tilde{p}},\ket{\tilde{q}}\}\in L_d$ it is thus sufficient to show that the last term is also non-negative.
	
	Using the bistochasticity of $B$ and the definition of $X$ we have
	\begin{align}
	\!\!\!\!& \sum_{j>1}\left( B_{j 1}  X_{11} + B_{j 2} X_{22}\right)
	\nonumber\\
	\!\!\!\!&\quad= 
	(1-B_{1 1})  X_{11} + (1-B_{1 2})  X_{22} \nonumber\\
	\!\!\!\!&\quad=
	{\sqrt{(1\!-\!B_{11})^2+(1\!-\!B_{12})^2}}{\sqrt{|X^{(1)}_{11}|^2+|X^{(1)}_{22}|^2}}.
	\end{align}
	As $B$ is bistochastic we also have 
	\begin{equation}
	(1-B_{11})^2+(1-B_{12})^2 \geq B^2_{11} + B^2_{12}.
	\end{equation}
	Employing this we finally arrive at
	\begin{align}
	& \sum_{j>1} B_{j 1}  X_{11} + B_{j 2} X_{22}
	\nonumber\\
	&\quad\geq 
	{\sqrt{B_{11}^2+B_{12}^2}}{\sqrt{|X^{(1)}_{11}|^2+|X^{(1)}_{22}|^2}}\nonumber\\
	&\quad\geq
	B_{11}X^{(1)}_{11} + B_{12}X^{(1)}_{22},
	\end{align}
	which shows that the last term on the right hand side of Eq.~\eqref{eqn:ch3} is non-negative, and thus implies the desired inequality given in Eq.~\eqref{eq:aim}.
\end{proof}

Finally, the third lemma is given as follows.

\begin{lemma}	
	\label{lem:rows2}
	For $t\in[0,1]$, let $\ket{p}=[t,1-t,0,\dots,0]$ be a vector of size $d$ and $B\in \L_d$ be a bracelet bistochastic matrix. Then, for any $k\geq 3$, we have \mbox{$\{B\ket{p} ,	B\ket{k}\} \in L_d$}, where $\ket{k}$ is $k^{\text{th}}$ unit vector.
\end{lemma}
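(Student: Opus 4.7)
The plan is to mimic the strategy of the proof of Lemma~\ref{lem:rows}, noting that the present case is a substantial simplification because $\ket{k}$ is a coordinate vector rather than a genuine two-level mixture. I would start by writing $\ket{\tilde p}=B\ket{p}$ and $\ket{\tilde q}=B\ket{k}$, so that $\tilde p_j = tB_{j1}+(1-t)B_{j2}$ and $\tilde q_j = B_{jk}$. Assuming without loss of generality that $\tilde p_1\tilde q_1=\max_j \tilde p_j\tilde q_j$, it is enough to prove $\sum_{j>1}\sqrt{\tilde p_j\tilde q_j}\geq \sqrt{\tilde p_1\tilde q_1}$; the degenerate case $\tilde p_1=0$ is trivial.

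The key step is an application of the Cauchy--Schwarz inequality in the form
\begin{equation*}
\sqrt{\tilde p_j\tilde q_j}\geq\sqrt{t}\sqrt{B_{j1}B_{jk}}\,x_1+\sqrt{1-t}\sqrt{B_{j2}B_{jk}}\,x_2,
\end{equation*}
valid for every $j$ and every pair $x_1,x_2\geq 0$ with $x_1^2+x_2^2=1$. Following the idea used in the proof of Lemma~\ref{lem:rows}, I would fix $x_1=\sqrt{tB_{11}/\tilde p_1}$ and $x_2=\sqrt{(1-t)B_{12}/\tilde p_1}$, for which this inequality becomes an equality at $j=1$.

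Subtracting the $j=1$ identity from the sum of the lower bounds for $j>1$ would then yield
\begin{align*}
&\sum_{j>1}\sqrt{\tilde p_j\tilde q_j}-\sqrt{\tilde p_1\tilde q_1}\\
&\quad\geq \sqrt{t}\,x_1\Bigl[\sum_{j>1}\sqrt{B_{j1}B_{jk}}-\sqrt{B_{11}B_{1k}}\Bigr]\\
&\quad+\sqrt{1-t}\,x_2\Bigl[\sum_{j>1}\sqrt{B_{j2}B_{jk}}-\sqrt{B_{12}B_{1k}}\Bigr].
\end{align*}
Both bracketed expressions are non-negative, precisely because $B\in\L_d$ satisfies the column bracelet conditions~\eqref{eq:bracelet1} for the column pairs $(1,k)$ and $(2,k)$. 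Combined with $t,1-t,x_1,x_2\geq 0$, this makes the right-hand side non-negative, which is exactly what is needed.

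I do not expect any serious obstacle here, because the statement is essentially a degenerate version of Lemma~\ref{lem:rows} in which $\ket{\tilde q}$ has no overlap with $\ket{\tilde p}$ on columns $1$ and $2$ of $B$. Consequently, none of the auxiliary bistochasticity bookkeeping used there (the secondary matrix $X$ and the inequality $(1-B_{11})^2+(1-B_{12})^2\geq B_{11}^2+B_{12}^2$) is required. The only delicate point is the choice of the Cauchy--Schwarz weights $x_1,x_2$: they must simultaneously saturate the bound at $j=1$ and leave residual sums that match exactly two of the column bracelet conditions of $B$.
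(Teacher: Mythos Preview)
Your proposal is correct and is essentially the same argument as the paper's. The paper writes the key step as the elementary inequality $\sqrt{a+b}\geq\sqrt{xa}+\sqrt{(1-x)b}$ for $x\in[0,1]$, which is precisely your Cauchy--Schwarz bound with $x_1=\sqrt{x}$, $x_2=\sqrt{1-x}$; its final choice $x=tB_{11}B_{1k}/\bigl(tB_{11}B_{1k}+(1-t)B_{12}B_{1k}\bigr)=tB_{11}/\tilde p_1$ coincides exactly with your $x_1^2$, and the remaining use of the column bracelet conditions for the pairs $(1,k)$ and $(2,k)$ is identical.
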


\begin{proof}	
Without loss of generality we assume that 
\mbox{$\bra{1}B\ket{p} \bra{1}B\ket{k} = \max_j \bra{j}B\ket{p}\bra{j}B\ket{k}$} and thus to prove the lemma we need to show that 
\begin{equation}
\sum_{j>1} \sqrt{\bra{j}B\ket{p}\bra{j}B\ket{k}}\geq
\sqrt{\bra{1}B\ket{p}\bra{1}B\ket{k}}.
\end{equation}
First, we note that for $x\in[0,1]$ and non-negative $a$ and $b$ the following inequality always holds:
\begin{equation}
	\label{eq:ineq}
	\sqrt{a+b}\geq \sqrt{xa}+\sqrt{(1-x)b}.
\end{equation}
Next, using the above term by term and employing the bracelet property of $B$ we get
\begin{align}	
&\!\!\!\sum_{j>1} \sqrt{\bra{j}B\ket{p}\bra{j}B\ket{k}} \nonumber\\
&=
\sum_{j>1} \sqrt{t B_{j1}B_{jk} + (1-t)B_{j2}B_{jk}} \nonumber\\
&\geq 
 \sqrt{x} \sum_{j>1}\sqrt{t B_{j1}B_{jk}} + 
 \sqrt{1-x} \sum_{j>1}\sqrt{(1-t)B_{j2}B_{jk}}\nonumber\\
 & \geq \sqrt{xtB_{11}B_{1k}} + 
 \sqrt{(1-x)(1-t)B_{12}B_{1k}}.
\end{align}
Finally, with the choice 
\begin{equation}
x=\frac{t B_{11} B_{1k}}{t B_{11} B_{1k} + (1-t)B_{12} B_{1k}},
\end{equation}
we get
\begin{align}	
\!\!\!\sum_{j>1} \sqrt{\bra{j}B\ket{p}\bra{j}B\ket{k}}&\geq \sqrt{t B_{11} B_{1k} + (1-t)B_{12} B_{1k}}\nonumber\\
&=\sqrt{\bra{1}B\ket{p}\bra{1}B\ket{k}}.
\end{align}
\end{proof}

We are now ready to present the proof of Theorem~\ref{thm:bracelet}.

\begin{proof}[Proof of Theorem~\ref{thm:bracelet}]
	We will prove that given $B\in\L_d$ and an elementary bistochastic matrix $E$, both products $BE$ and $EB$ belong to $\L_d$. As every factorisable matrix $F\in\F_d$ can be represented as a product of elementary matrices, it will prove that bracelet matrices are closed under multiplication by factorisable matrices. Moreover, since every elementary matrix is bracelet, it will also prove that every factorisable matrix is bracelet.
	
	Without loss of generality we may assume \mbox{$E_{11}=E_{22}=t$}, \mbox{$E_{12}=B_{21}=1-t$}, $E_{kk}=1$ for all $k\geq 3$, and all other entries of $E$ being zero.	Let us focus on $\tilde{B}=EB$, as the argument for $BE$ is analogous, just with columns swapping their roles with rows and vice versa. Let us denote the vectors corresponding to the $k^{\text{th}}$ and $l^{\text{th}}$ column of $B$ by $\ket{p}$ and $\ket{q}$. Then, the $k^{\text{th}}$ and $l^{\text{th}}$ column of $\tilde{B}$ are given by $E\ket{p}$ and $E\ket{q}$. Since $\{\ket{p},\ket{q}\}\in L_d$ (due to $B\in\L_d$), Lemma~\ref{lem:columns} tells us that $\{E\ket{p},E\ket{q}\}\in L_d$. In other words, matrix $\tilde{B}$ satisfies the bracelet condition for columns.
	
	Now, let us show that $\tilde{B}$ also satisfies the bracelet condition for rows. Let $\ket{p}=[t,1-t,0,\dots,0]$ and \mbox{$\ket{q}=[1-t,t,0,\dots,0]$} be vectors of size $d$. Then, the first two rows of $\tilde{B}$ are $B\ket{p}$ and $B\ket{q}$, while the remaining $(d-2)$ rows are the same as for $B$. The first two rows satisfy the bracelet condition due to Lemma~\ref{lem:rows}. Any of the first two rows with the $k^{\text{th}}$ row for $k\geq 3$ satisfy the bracelet condition due to Lemma~\ref{lem:rows2}. Finally, the $k^{\text{th}}$ and $l^{\text{th}}$ row for $k,l\geq 3$ satisfy the bracelet condition because $B\in\L_d$.
\end{proof}

% -------------------------------------------------------------

\subsection{Geometric properties}
\label{sec:bracelet_geo}

Although the set of bracelet matrices $\L_d$ is not convex (as seen by the simplest example of a $Q$ matrix from Eq.~\eqref{eq:Q_matrix}), we will now show that both $\L_d$ and its subset $\F_d$ satisfy a weaker condition of being star-shaped with respect to the flat matrix $W_d$.
\begin{theorem}[Star-shape property of $\L_d$ and $\F_d$]
	\label{thm:star}
	The sets of bracelet matrices and factorisable matrices, $\L_d$ and $\F_d$, are star-shaped with respect to the flat van der Waerden matrix $W_d$, i.e., if $B\in \L_d$ (or $B\in \F_d$) then for all $\lambda\in[0,1]$ also
	\begin{equation}
		\label{eq:star}
		B'(\lambda):=(1-\lambda) B+\lambda W_d
	\end{equation}
	belongs to $\L_d$ (or to $\F_d$).
\end{theorem}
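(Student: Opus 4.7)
My plan is to reduce both star-shape claims to a single statement about the interpolation matrix $M(\lambda) := (1-\lambda)\iden_d + \lambda W_d$. The key algebraic identity is that for any bistochastic matrix $B$ one has $BW_d = W_d$, since each entry of $BW_d$ equals $\sum_l B_{jl}/d = 1/d$. Consequently, the convex combination defining $B'(\lambda)$ can be rewritten as a matrix product,
\begin{equation}
B'(\lambda) = (1-\lambda)B + \lambda B W_d = B \cdot M(\lambda).
\end{equation}

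With this reformulation, both statements of the theorem follow at once once it is known that $M(\lambda)\in \F_d$ for every $\lambda\in[0,1]$. Indeed, if $B \in \F_d$ and $M(\lambda) \in \F_d$, then $BM(\lambda)\in\F_d$ because the concatenation of two products of elementary matrices is again such a product, so $\F_d$ is closed under multiplication. If $B\in \L_d$ and $M(\lambda)\in \F_d$, then $BM(\lambda)\in \L_d$ by Theorem~\ref{thm:bracelet}.

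The crux of the argument is therefore to show $M(\lambda)\in\F_d$. I would exploit the semigroup property $M(\mu_1)M(\mu_2)=M\bigl(1-(1-\mu_1)(1-\mu_2)\bigr)$, which follows from $W_d^2=W_d$ and which, upon reparametrising via $t=-\ln(1-\mu)$, becomes a genuine one-parameter semigroup generated by $L=W_d-\iden_d$. The crucial observation is that $L$ decomposes into a sum of pairwise generators $L=(1/d)\sum_{i<j}(S_{ij}-\iden_d)$, where $S_{ij}$ is the swap of indices $i$ and $j$, and each exponential $e^{s(S_{ij}-\iden_d)/d}$ is itself an elementary bistochastic matrix (a short computation using $(S_{ij}-\iden_d)^2=-2(S_{ij}-\iden_d)$ gives it in closed form). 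The Lie--Trotter product formula then expresses $M(\lambda)$ as the limit of $n$-fold products of elementary matrices, and since $\F_d$ is topologically closed (its definition allows $n$ to be infinite), $M(\lambda)\in\F_d$.

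The main obstacle is justifying the Lie--Trotter convergence with uniform control of the cumulative error; this is standard for matrix exponentials and presents no additional difficulty in the bistochastic setting, as all intermediate matrices are manifestly bistochastic. A more elementary alternative avoids Lie--Trotter entirely: verify by direct first-order expansion that $\prod_{i<j} E_{ij}(1-\mu/d) = M(\mu)+O(\mu^2)$, then invoke the semigroup identity $M(\mu)^n=M\bigl(1-(1-\mu)^n\bigr)$ with $\mu_n=1-(1-\lambda)^{1/n}\to 0$ to realise $M(\lambda)$ as a limit of explicit finite products of elementary bistochastic matrices.
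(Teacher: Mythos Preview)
Your proposal is correct and follows the same overall scaffolding as the paper: rewrite $B'(\lambda)=B\cdot M(\lambda)$ with $M(\lambda)=(1-\lambda)\iden_d+\lambda W_d$, reduce everything to showing $M(\lambda)\in\F_d$, and then realise $M(\lambda)=e^{tL}$ with $L=W_d-\iden_d$ as a Lie--Trotter limit of products lying in the closed set $\F_d$. The difference lies in how the generator is decomposed. The paper proceeds by induction on $d$, writing $G_{d+1}=\frac{d}{(d-1)(d+1)}\sum G_d\oplus 0$ (sum over the $d+1$ placements of the trivial block) and invoking the inductive hypothesis that each $e^{sG_d}\oplus 1$ is already factorisable. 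You instead decompose directly into pairwise transposition generators, $L=\frac{1}{d}\sum_{i<j}(S_{ij}-\iden_d)$, and observe that each factor $e^{s(S_{ij}-\iden_d)}=\iden_d+\tfrac{1-e^{-2s}}{2}(S_{ij}-\iden_d)$ is itself an elementary bistochastic matrix. Your route is more direct---no induction is needed and the Trotter factors are elementary rather than merely factorisable---while the paper's inductive decomposition has the mild advantage of making explicit how the structure in dimension $d$ embeds in dimension $d+1$. Both arguments rely equally on the topological closedness of $\F_d$ to pass to the limit.
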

\begin{proof}
	First, note that it is enough to prove that for any $d$ and $\lambda \in [0,1]$ matrices $C_d(\lambda)$ from the ray connecting the identity with the van der Waerden matrix,
	\begin{equation}
		C_d(\lambda):=(1-\lambda) \iden_d + \lambda W_d,	
	\end{equation}
	are factorizable. For if this is the case, then given any $B\in \F_d$ we have that
	\begin{equation}
		BC_d(\lambda)=(1-\lambda)B +\lambda W_d=B'(\lambda)
	\end{equation}
	is also factorisable (as it is a product of two factorisable matrices). Also, given any $B\in \L_d$ we have that $B'(\lambda)$ is bracelet due to Theorem~\ref{thm:bracelet} (as it is a product of a bracelet matrix with a factorisable matrix).
	
	The proof will be inductive in the dimension of the system. For $d=2$ every bistochastic matrix belongs to $\F_2$, so that $C_2(\lambda)$ are factorizable for all $\lambda\in[0,1]$. Now, let us assume that for a given dimension $d$ and any $\lambda \in [0,1]$ a matrix $C_d(\lambda)$ belongs to $\F_d$. Note that 
	\begin{equation}
		C_d(\lambda)  = e^{t G_d},
	\end{equation}
	where $t= - \log (1-\lambda)$ and 
	\begin{equation}
		G_d := W_d - \iden_d.
	\end{equation}
	From our assumption, if we embed the matrix $C_d(\lambda)$ in $(d+1) \times (d+1)$ matrices by adding a $1 \times 1$ diagonal block we obtain a matrix from $\F_{d+1}$, i.e. 
	\begin{equation}
		C_d(\lambda) \oplus 1 \in \F_{d+1}.
	\end{equation}
	In the above the diagonal $1 \times 1$ block containing $1$ can be placed on arbitrary position. We also have 
	\begin{equation}
		C_d(\lambda) \oplus 1  = 	e^{t G_d \oplus 0}.
	\end{equation}
	Since direct calculation reveals that 
	\begin{equation}
		G_{d+1} = \frac{d}{(d-1)(d+1)}\sum G_{d}\oplus 0,
	\end{equation}
	where the summation is over all possible placements of the diagonal $1\times 1$ block containing $0$, we get 
	\begin{equation}
		C_{d+1}(\lambda)  =  e^{ t \frac{d}{(d-1)(d+1)}\sum 	G_{d}\oplus 0}.
	\end{equation}
	Now using the above and the Lie product formula,
	\begin{equation}
		e^{A+B}=\lim_{N\rightarrow \infty}\left(e^{A/N}e^{B/N}\right)^N,
	\end{equation}
	we can write $C_{d+1}(\lambda)$ as a limit of product of matrices belonging to the closed set $\F_{d+1}$.	
\end{proof}

Let us also note that if Conjecture~\ref{conj:bracelet} were true, the proof of the star-shape property of $\L_d$ could be significantly simplified. First, one could rewrite Eq.~\eqref{eq:star} as
	\begin{equation}
		B'(\lambda) = [(1-\lambda)\iden_d+\lambda W_d]B.
	\end{equation}
Next, one would note that $(1-\lambda)\iden_d + \lambda_d W$ belongs to $\L_d$, as in the bracelet conditions we always get two copies of the longest segment. Finally, employing Conjecture~\ref{conj:bracelet}, one would obtain that $B'(\lambda)\in\L_d$ as it is a product of two bracelet matrices.

We illustrate Theorem~\ref{thm:star} in Fig.~\ref{fig:star_shaped}, where on two-dimensional cuts through $(d-1)^2$-dimensional Birkhoff's polytopes we indicate in colour those bistochastic matrices that belong to $\L_d$. As can be clearly seen, for any bracelet matrix, also all matrices lying on the line connecting it to the flat matrix $W_d$ (the central point in the plots), belong to $\L_d$.

\begin{figure}[t]
	\begin{center}
		\begin{tikzpicture}
		\node (myfirstpic) at (0,5.5) {\includegraphics[width=0.8\columnwidth]{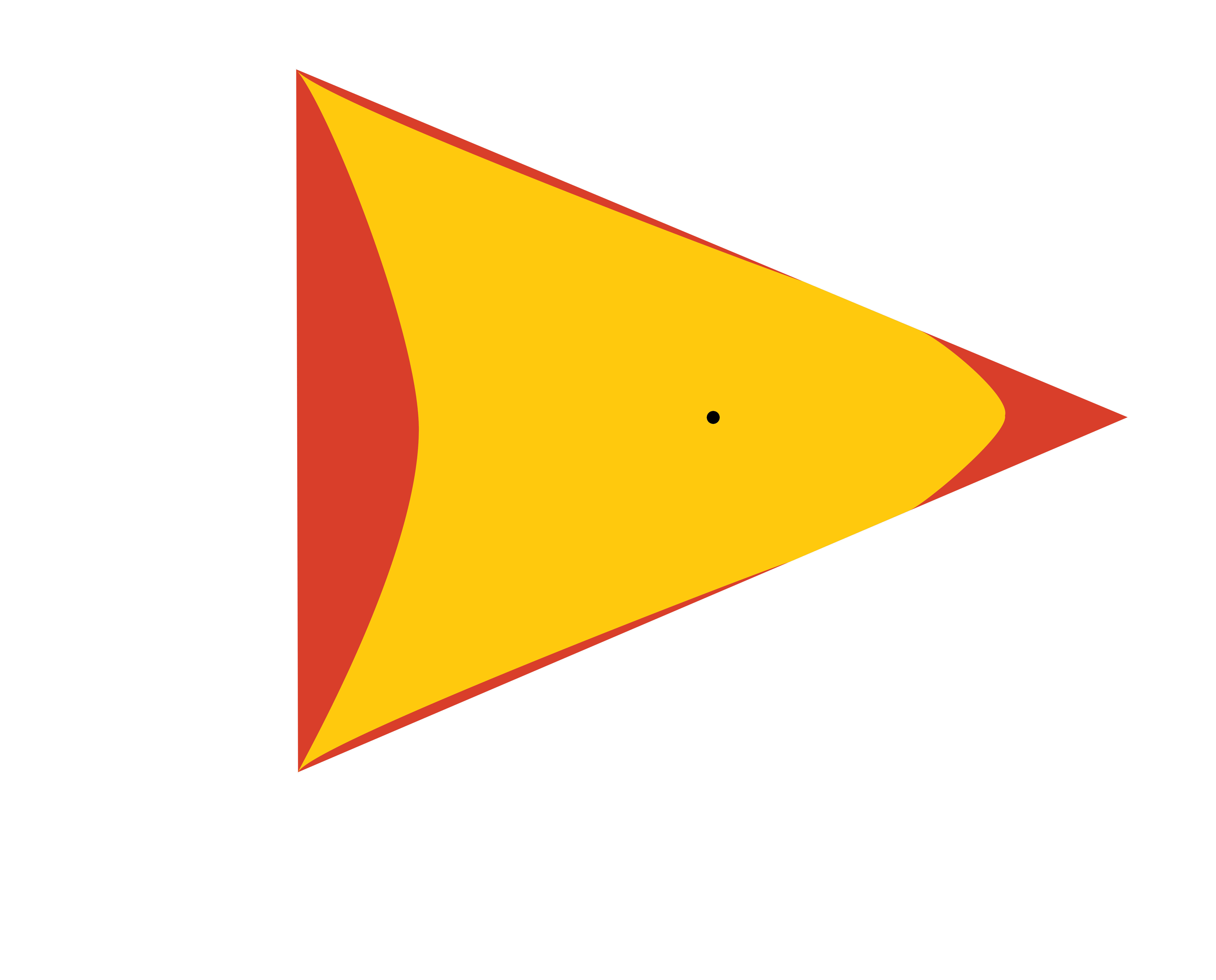}};		
		\node [label={$B=\dfrac{1}{2}\begin{pmatrix}0 &1 &1 &0\\0&0&1&1\\1&0&0&1\\1&1&0&0\end{pmatrix}$}] at (2,6.5){};

		\node [label={$\iden_4$}] at (-2,7.5){};
		\node [label={$W_4$}] at (0.2,5.5){};
		\node [label={$B$}] at (3.3,5.6){};
		\node [label={$\L_4$}] at (-0.5,4.5){};
		\node [label={$\B_4$}] at (-1.4,5.5){};

		\node (myfirstpic) at (0,0) {\includegraphics[width=0.8\columnwidth]{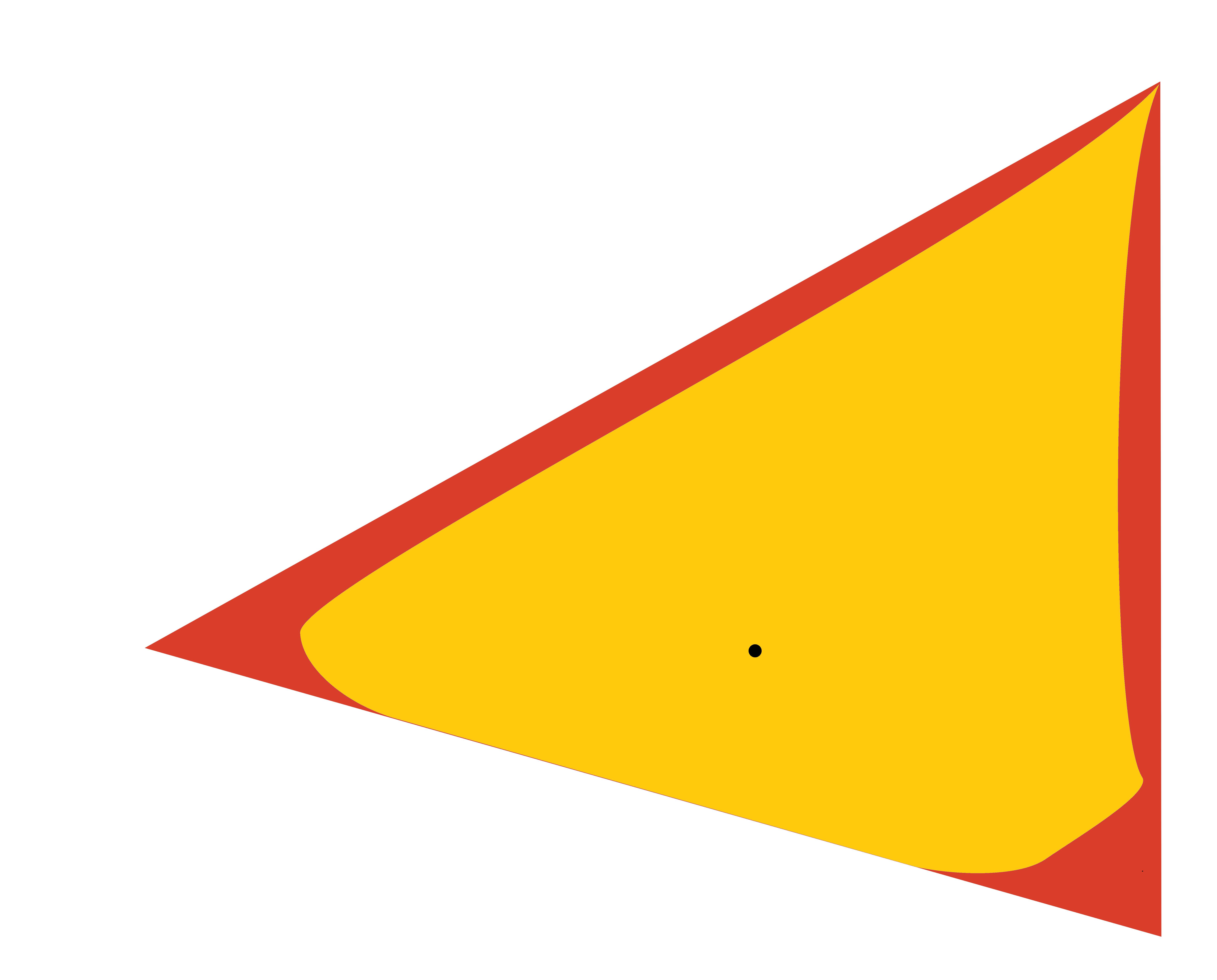}};		
		\node[label={$B=\dfrac{1}{3}\begin{pmatrix}1&0 &1 & 0 &1\\1&1&0&1&0\\0&1&1&0&1\\1&0&1&1&0\\0&1&0&1&1\end{pmatrix}$}] at (-1.5,0.5){};
		\node [label={$\iden_5$}] at (3.5,2.1){};
		\node [label={$\B_5$}] at (2.9,-2.55){};
		\node [label={$\L_5$}] at (2,0.3){};
		\node [label={$W_5$}] at (0.5,-1.5){};
		\node [label={$B$}] at (-2.9,-1.3){};
		\end{tikzpicture}
	\end{center}
	\caption{\textbf{Star-shaped property of bracelet matrices.} Two-dimensional cross-sections through the Birkhoff polytope (top: $d=4$, bottom: $d=5$) containing the flat matrix $W_d$, identity matrix $\iden_d$ and a bistochastic matrix $B$. Matrices belonging to the set of bracelet matrices $\mathcal{L}_d$ are indicated in yellow, while the remaining bistochastic matrices are presented in red. The graph was plotted using barycentric coordinates with the flat matrix $W_d$ lying in the centre $(0,0)$, illustrating that although the set $\mathcal{L}_d$ is not convex, it is star-shaped with respect to $W_d$. \label{fig:star_shaped}}
\end{figure}

% -------------------------------------------------------------

\subsection{Tensor structure}

Finally, we also present the result concerning the behaviour of bracelet matrices under the tensor product.
\begin{theorem}[Tensor structure of $\L_d$]
	If $B_1\in \L_{d_1}$ and $B_2\in\L_{d_2}$ then \mbox{$B=B_1 \otimes B_2$}{} belongs to~$\L_{d_1d_2}$.
\end{theorem}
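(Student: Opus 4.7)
The plan is to verify the column and row bracelet conditions of Definition~\ref{def:bracelet} for $B = B_1 \otimes B_2$ by direct computation, exploiting the fact that both the sum and the maximum factorize across a tensor product. I would index rows and columns of $B$ by pairs, writing $B_{(j_1,j_2),(k_1,k_2)} = (B_1)_{j_1,k_1}(B_2)_{j_2,k_2}$. For any two column indices $(k_1,k_2) \neq (l_1,l_2)$, the relevant square root factorizes as $\sqrt{B_{(j_1,j_2),(k_1,k_2)} B_{(j_1,j_2),(l_1,l_2)}} = a_{j_1} b_{j_2}$ with $a_{j_1} := \sqrt{(B_1)_{j_1,k_1}(B_1)_{j_1,l_1}}$ and $b_{j_2} := \sqrt{(B_2)_{j_2,k_2}(B_2)_{j_2,l_2}}$. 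Consequently $\sum_{j_1,j_2} a_{j_1} b_{j_2} = S_1 S_2$ and $\max_{j_1,j_2} a_{j_1} b_{j_2} = M_1 M_2$, where $S_i$ and $M_i$ denote respectively the sum and the maximum of the one-factor quantities. The column bracelet condition on $B$ therefore reduces to showing $2 M_1 M_2 \leq S_1 S_2$.

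Next I would split into cases depending on whether $k_1 = l_1$ or $k_2 = l_2$, noting that at most one of these equalities can hold since the column index pairs must differ. In the generic case $k_1 \neq l_1$ and $k_2 \neq l_2$, the hypotheses $B_1 \in \L_{d_1}$ and $B_2 \in \L_{d_2}$ give $2 M_1 \leq S_1$ and $2 M_2 \leq S_2$, so that $4 M_1 M_2 \leq S_1 S_2$, which is even stronger than required. If $k_1 = l_1$ (forcing $k_2 \neq l_2$), then $a_{j_1} = (B_1)_{j_1,k_1}$, and bistochasticity of $B_1$ yields $S_1 = \sum_{j_1} (B_1)_{j_1,k_1} = 1$ and $M_1 \leq 1$. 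Combined with $2 M_2 \leq S_2$ from the bracelet condition on $B_2$, this gives $2 M_1 M_2 \leq 2 M_2 \leq S_2 = S_1 S_2$. The remaining case $k_2 = l_2$ is entirely symmetric.

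The row bracelet condition is then obtained by running the identical argument with rows and columns interchanged, which is legitimate because Definition~\ref{def:bracelet} treats the two directions symmetrically and $\L_d$ is closed under transposition. I do not anticipate any genuine obstacle here: the whole proof collapses to the factorization identities for $\sum$ and $\max$ on tensor products together with a short case split. The only point requiring a little care is the boundary cases in which the index pairs agree in exactly one tensor factor, as there the inequality must be salvaged using bistochasticity rather than the hypothesis $B_i \in \L_{d_i}$ (which does not apply when the two indices coincide).
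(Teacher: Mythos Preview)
Your proof is correct and follows essentially the same route as the paper's: factorize the square-root terms across the tensor product and then invoke the bracelet conditions (or bistochasticity) on the factors $B_1$ and $B_2$. Your case split is in fact more explicit than the paper's, which writes out only the configuration where the two column indices share one tensor factor (the paper's ``first two columns'' correspond to $(0,0)$ and $(0,1)$) and covers the generic case $k_1\neq l_1$, $k_2\neq l_2$ only through an unadorned ``without loss of generality''.
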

\begin{proof}
	Without loss of generality we may focus on the bracelet conditions for the first two columns of $B$. We have
	\begin{subequations}
	\begin{align}
		\ket{p} &= B \ket{0} = B_1 \ket{0} \otimes B_2 \ket{0} =: \ket{p^{(1)}} \otimes \ket{p^{(2)}}, \\
		\ket{q} &= B \ket{1} = B_1 \ket{0} \otimes B_2 \ket{1} =: \ket{p^{(1)}} \otimes \ket{q^{(2)}}.
	\end{align}
	\end{subequations}
	Since both $B_1$ and $B_2$ are bracelet, we can use the bracelet condition for vectors $p^{(2)}$ and $q^{(2)}$ to obtain the following
	\begin{align}
	&\sqrt{\matrixel{0}{B}{0} \matrixel{0}{B}{1}}  \non
	&\qquad=\sqrt{p^{(1)}_0 p^{(1)}_0}\sqrt{p^{(2)}_0  q^{(2)}_0}\non
	&\qquad\leq \sum_{(j,k) \neq (0,0)}	\sqrt{p^{(1)}_j p^{(1)}_j}	\sqrt{p^{(2)}_k q^{(2)}_k}\non
	&\qquad=
	\sum_{(j,k) \neq (0,0)}
	\sqrt{\bra{j} B_1 \ket{0} \bra{k} B_2 \ket{0}}\sqrt{\bra{j} B_1 \ket{0} \bra{k} B_2 \ket{1}} \non
	&\qquad=
	\sum_{(j,k) \neq (0,0)}
	\sqrt{\bra{jk}B\ket{00} \bra{jk} B\ket{01}} \non
	&\qquad=
	\sum_{k\neq 0}
	\sqrt{\bra{k}B\ket{0} \bra{k} B \ket{1}}.
	\end{align}

\end{proof}

% -------------------------------------------------------------
% SEC. IV - STRUCTURE OF THE SET OF UNISTOCHASTIC MATRICES
% -------------------------------------------------------------

\section{Structure of the set of unistochastic matrices}
\label{sec:uni}

\subsection{Low-dimensional cases}

As already explained in Sec.~\ref{sec:preliminaries}, unistochastic matrices of size $d$ form a subset of the Birkhoff polytope $\B_d$. Unfortunately, it is unclear how to extend some low-dimensional properties of the Birkhoff polytope to higher dimensions~\cite{jaekel2011polytopes,pak2000Birkhoff}, and proving general statements concerning the unistochastic set is even harder. Thus, we shall first address several low dimensions individually to build up some understanding of $\U_d$.

For $d=2$ it is well known that the sets of bistochastic and unistochastic matrices coincide. More precisely, the general $2\times 2$ bistochastic matrix and the corresponding unitary matrix (chosen to be circulant for later convenience) are given by
\begin{equation}
	\label{eq:2x2_uni}
	B=\begin{pmatrix}
			a&1-a\\
			1-a& a
	\end{pmatrix},\quad 
	U=\begin{pmatrix}
			i\sqrt{a}&\sqrt{1-a}\\
			\sqrt{1-a}&i\sqrt{a}
	\end{pmatrix}.
\end{equation}
As a result, the set of $2\times 2$ unistochastic matrices $\U_2$ shares all the geometric and algebraic properties with $\B_2$, i.e., it is convex and closed under matrix multiplication. 

For $d=3$ it has been established that the bracelet inequalities are not only necessary, but also sufficient~\cite{bengtsson2005birkhoff}. Therefore, $\U_3=\L_3$, and so employing our results, Theorem~\ref{thm:bracelet} and Theorem~\ref{thm:star}, we arrive at the following.
\begin{corollary}[Structure of $\U_3$]
	The set of $3\times 3$ unistochastic matrices is closed under multiplication by factorisable matrices and is star-shaped with respect to the flat matrix $W_3$.
\end{corollary}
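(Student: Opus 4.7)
The plan is to exploit the equality $\U_3 = \L_3$, which, as cited from Ref.~\cite{bengtsson2005birkhoff}, holds because for $d=3$ the bracelet inequalities are not merely necessary but also sufficient for unistochasticity. Once this identification is in place, the corollary is an immediate specialisation of the two main structural theorems of the preceding section to dimension three, so there is effectively no work beyond invoking them.

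Concretely, I would proceed in two short steps. First, for closure under multiplication by factorisable matrices: take any $A \in \F_3$ and any $B \in \U_3$. By the identification $\U_3 = \L_3$, we have $B \in \L_3$, so Theorem~\ref{thm:bracelet} applied in dimension $d=3$ yields $AB \in \L_3$ and $BA \in \L_3$. Using $\U_3 = \L_3$ once more converts these back into membership in $\U_3$. Second, for star-shapedness: take any $B \in \U_3 = \L_3$ and $\lambda \in [0,1]$. Theorem~\ref{thm:star} applied to $\L_3$ gives that $(1-\lambda)B + \lambda W_3 \in \L_3$, and again the equality $\U_3 = \L_3$ upgrades this to membership in $\U_3$.

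There is no real obstacle here; the entire content of the corollary rests on the dimension-three miracle $\U_3 = \L_3$, which we cite rather than reprove. The only thing worth emphasising in the write-up is that both conclusions are strict specialisations that would fail for the general $\U_d$ precisely to the extent that $\L_d$ strictly contains $\U_d$ for $d \geq 4$; this motivates why the bracelet set $\L_d$ is a useful proxy and why this corollary is the cleanest payoff of Theorems~\ref{thm:bracelet} and~\ref{thm:star} in low dimension.
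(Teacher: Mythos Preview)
Your proposal is correct and matches the paper's own argument essentially verbatim: the paper simply invokes the equality $\U_3=\L_3$ from Ref.~\cite{bengtsson2005birkhoff} and then appeals to Theorems~\ref{thm:bracelet} and~\ref{thm:star}. There is nothing to add or change.
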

We note that the star-shaped property of $\U_3$ was previously observed and is actually conjectured to hold for all $d$~\cite{bengtsson2005birkhoff}. Moreover, if Conjecture~\ref{conj:bracelet} were true (as the numerical evidence strongly suggests for $d=3$), mathematically it would motivate the introduction of a new product operation $\times$ between unitary matrices of size 3 defined via the product of the corresponding unistochastic matrices. More precisely, for unitary matrices $U$ and $V$ with the corresponding unistochastic matrices $B_U$ and $B_V$, the product $U \times V$ would be given by the unitary corresponding to the unistochastic matrix $B_U B_V$. Note that such a product would not be unique as it is defined up to phase transformations (left and right multiplication by diagonal unitary matrices), but it would become unique if one focused on unitary matrices in the dephased form, i.e., with the first row and column being real and positive. On the other hand, physically we would see that every qutrit quantum channel defined by consecutive unitary evolutions $U_i$ followed by measurements in the computational basis could be realised by a single unitary evolution $U$ followed by a single measurement in the same basis.

For $d=4$ bracelet conditions are only necessary and not sufficient for unistochasticity, i.e., there are matrices in $\L_4$ that do not belong to $\U_4$. The simplest way to see this is to recall that the flat matrix $W_4$ lies on the boundary of unistochastic set, and so in every neighbourhood around the centre of the Birkhoff polytope there are non-unistochastic matrices~\cite{bengtsson2005birkhoff}. On the other hand, it is easy to see that there exists a ball with finite radius around $W_4$ with all matrices within the ball fulfilling the bracelet conditions. Moreover, a statement analogous to Conjecture~\ref{conj:bracelet} for unistochastic matrices will not hold, because the set $\U_4$ does not form a monoid. A simple counterexample is given by the following unistochastic matrix
\begin{equation}
B=\frac{1}{100}
\begin{pmatrix}
24 & 16 & 35 & 25 \\
38 & 21 & 12 &29 \\
23 & 24 & 14 & 39 \\
15 & 39 & 39 & 07
\end{pmatrix},
\end{equation}
the square of which, $B^2$, is not unistochastic. Here, to verify the unistochasticity of a $4\times 4$ bistochastic matrix, we used the algorithm of Ref.~\cite{rajchel2018unistochastic}. What is more, numerical results obtained there suggest that the set of matrices that fulfil the bracelet conditions but are not unistochastic is not of measure zero. We illustrate the above considerations in Fig.~\ref{fig:4x4}, where one can clearly see the difference between $\L_4$ and $\U_4$.

\begin{figure}[t]
	\begin{center}
		\begin{tikzpicture}
			\node (myfirstpic) at (0,0) {\includegraphics[width=0.8\columnwidth]{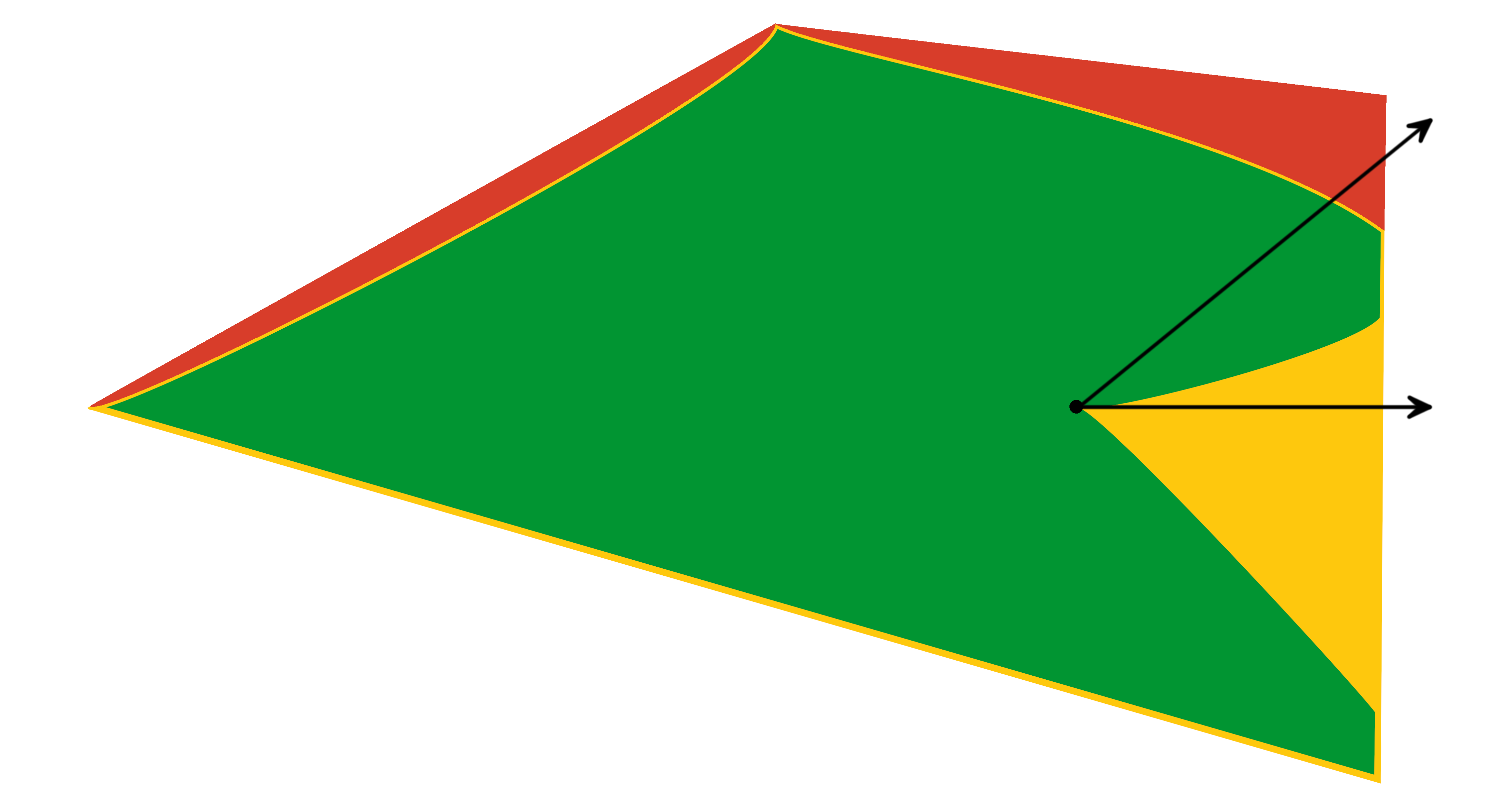}};		
			\node [label={$v_1=\begin{pmatrix}9 &-3 &-3 &-3\\-3&1&1&1\\-3&1&1&1\\-3&1&1&1\end{pmatrix}\!,$}] at (-2.3,-4){};
			\node [label={$v_2=\begin{pmatrix}7 &-1 &-1 &-5\\-1&-1&-1&3\\-1&-1&-1&3\\-5&3&3&-1\end{pmatrix}$}] at (1.5,-4){};	
			
			\node at (0.14\columnwidth,-0.0\columnwidth) {\color{black}$W_4$};
			\node at (0.29\columnwidth,-0.07\columnwidth) {\color{black}$\L_4$};
			\node at (-0.05\columnwidth,0.01\columnwidth) {\color{black}$\U_4$};
			\node at (0.42\columnwidth,-0.02\columnwidth) {\color{black}$v_1$};	
			\node at (0.42\columnwidth,0.16\columnwidth) {\color{black}$v_2$};		
			\node at (0.295\columnwidth,0.135\columnwidth) {\color{black}$\B_4$};					
		\end{tikzpicture}
	\end{center}
	\caption{\textbf{$\U_4$ is strictly smaller than $\L_4$.} Two-dimensional cross-section through the Birkhoff polytope with $d=4$ containing the flat matrix $W_4$. The graph was plotted using barycentric coordinates with the flat matrix $W_4$ lying in the centre $(0,0)$, illustrating that $W_4$ lies at the boundary of the unistochastic set, as no matrix on the $v_1$ line is unistochastic.\label{fig:4x4}}
\end{figure}

However, for $d=4$ the situation changes drastically when we restrict our considerations to circulant unistochastic matrices, which is captured by the following theorem.
\begin{theorem}[Structure of $\C_4\cap\U_4$]
	\label{thm:4x4}
	The set of circulant unistochastic matrices, $\C_4\cap\U_4$, coincides with the set of circulant bracelet matrices $\C_4\cap\L_4$. Thus, $\C_4\cap\U_4$ is closed under multiplication by circulant factorisable matrices and is star-shaped with respect to the flat matrix~$W_4$.
\end{theorem}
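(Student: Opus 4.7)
The strategy is to prove the set equality $\C_4\cap\U_4=\C_4\cap\L_4$ first; given this, the other two assertions follow immediately from Theorems~\ref{thm:bracelet} and~\ref{thm:star}. Indeed, if $A\in\C_4\cap\F_4$ and $B\in\C_4\cap\U_4=\C_4\cap\L_4$, then both products $AB$ and $BA$ are circulant (the product of two circulant matrices is circulant) and belong to $\L_4$ by Theorem~\ref{thm:bracelet}, hence to $\C_4\cap\L_4=\C_4\cap\U_4$; likewise $(1-\lambda)B+\lambda W_4$ is circulant (as $W_4$ is) and belongs to $\L_4$ by Theorem~\ref{thm:star}, hence to $\C_4\cap\U_4$. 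The inclusion $\C_4\cap\U_4\subseteq\C_4\cap\L_4$ is immediate from $\U_4\subseteq\L_4$, so only the reverse inclusion requires genuine work.

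To handle the reverse inclusion I would first exploit circulant symmetry to collapse the twelve bracelet inequalities to a single non-trivial one. For $B$ circulant with first row $(a,b,c,e)$, the products $\sqrt{B_{jk}B_{jl}}$ depend only on $l-k\bmod 4$: at offset~$2$ they form the multiset $\{\sqrt{ac},\sqrt{be},\sqrt{ac},\sqrt{be}\}$ and the bracelet inequality is tautological, while at offset~$1$ (equivalently $3$) they are $\sqrt{ab},\sqrt{ae},\sqrt{bc},\sqrt{ce}$, and using the factorisation
\begin{equation}
\sqrt{ab}+\sqrt{ae}+\sqrt{bc}+\sqrt{ce}=(\sqrt{a}+\sqrt{c})(\sqrt{b}+\sqrt{e})
\end{equation}
together with $\max=\sqrt{\max(a,c)\max(b,e)}$, the whole system reduces to the single inequality $2\sqrt{\max(a,c)\max(b,e)}\le(\sqrt{a}+\sqrt{c})(\sqrt{b}+\sqrt{e})$. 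The row conditions coincide with the column ones by the same symmetry.

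The core task is then to build, for any such $B$, a unitary $U$ with $|U_{jk}|^2=B_{jk}$. Using a circulant Ansatz with first row $u_k=\sqrt{r_k}e^{i\theta_k}$, unitarity reduces modulo overall phase to exactly two equations on the phase differences $\eta_m=\theta_{m-1}-\theta_m$ (which automatically satisfy $\sum_m\eta_m=0$): a complex shift-by-$1$ quadrilateral-closure equation
\begin{equation}
\sqrt{ae}\,e^{-i\eta_4}+\sqrt{ab}\,e^{-i\eta_1}+\sqrt{bc}\,e^{-i\eta_2}+\sqrt{ce}\,e^{-i\eta_3}=0,
\end{equation}
which is solvable precisely when the reduced bracelet inequality holds, and a real shift-by-$2$ equation $\sqrt{ac}\cos(\eta_1+\eta_2)+\sqrt{be}\cos(\eta_2+\eta_3)=0$ that has to consume the remaining one-parameter flexing freedom of the closed quadrilateral. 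I would show that the shift-by-$2$ equation always admits a root within this flexing family by an intermediate-value argument applied at the two ``flat'' degenerations of the quadrilateral, where the two cosines acquire definite signs and can be shown to be opposite (or vanishing). Degenerate parameter cases (one of $a,b,c,e$ vanishing, or the bracelet inequality saturating) are handled separately via explicit phase-weighted combinations of $\iden_4,\Pi,\Pi^2,\Pi^3$.

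The main obstacle is this simultaneous-solvability step: a priori the shift-by-$2$ expression might fail to change sign along the flexing family, in which case the doubly circulant Ansatz would not cover all of $\C_4\cap\L_4$. In that eventuality I would relax to a strictly weaker $\Pi^2$-symmetric Ansatz for $U$, that is, impose only $U=\Pi^2 U\Pi^{-2}$ instead of full circularity; this keeps $|U|^2$ circulant while introducing additional independent phases, which should be enough to fulfil the shift-by-$2$ constraint throughout the remaining portion of the bracelet region and close the proof.
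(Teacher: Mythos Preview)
Your overall strategy matches the paper's: reduce to showing $\C_4\cap\L_4\subseteq\U_4$ via a circulant unitary Ansatz, obtain one complex (shift-by-$1$) and one real (shift-by-$2$) constraint on the phases, and resolve simultaneous solvability by an intermediate-value argument driven by the bracelet inequality; the closure and star-shape claims then follow from Theorems~\ref{thm:bracelet} and~\ref{thm:star} exactly as you say.

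The notable difference is the order in which you attack the two constraints. The paper solves the \emph{real} shift-by-$2$ equation first: writing $n=2\sqrt{ac}\cos\beta+2\sqrt{bd}\cos(\alpha-\gamma)=0$ gives $\alpha=\gamma+\arccos(-\sqrt{ac/bd}\,\cos\beta)$ explicitly (after the harmless normalisation $ac\le bd$), leaving $\beta$ as the single free parameter. The complex equation $m=0$ then becomes the requirement that a certain explicit ratio have modulus one, and Bolzano's theorem applied to the difference of the moduli of numerator and denominator over $\beta\in[\pi/2,3\pi/2]$ yields the root; at the two endpoints this difference evaluates to $|\sqrt{ab}-\sqrt{cd}|-(\sqrt{bc}+\sqrt{ad})$ and $(\sqrt{ab}+\sqrt{cd})-|\sqrt{bc}-\sqrt{ad}|$, whose signs are fixed \emph{precisely} by the bracelet inequality. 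Your route---close the quadrilateral first and then hunt for a sign change of the shift-by-$2$ expression along the flexing family---is in principle equivalent, but the ``flat degenerations'' step is where the argument is most delicate, and you have not yet explained why the two cosines must take opposite signs there; the paper's ordering sidesteps this by producing an explicit one-variable function whose endpoint values are immediately controlled by the bracelet condition. Your $\Pi^2$-symmetric fallback is unnecessary: the circulant Ansatz succeeds on all of $\C_4\cap\L_4$, which incidentally is what gives $\C_4\cap\U_4=\C_4^2$ in Lemma~\ref{lemma:circulant}.
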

\begin{proof}
First, we will show that a general $4\times 4$ circulant matrix $B$ that satisfies the bracelet condition, \mbox{$B\in \C_4\cap\L_4$}, is also unistochastic. Let us parametrise $B$ as follows:
\begin{equation}
    B = \begin{pmatrix}
    a & b & c & d \\
    d & a & b & c\\
    c & d & a & b \\
    b & c & d  & a
    \end{pmatrix},
\end{equation}
with $a+b+c+d = 1$. We will assume $ac$ or $bd$ is greater than 0, since if both of them are equal 0 then $B$ is not a bracelet matrix. Furthermore, without loss of generality, let us assume that $ac\leq bd$ (for if this is not the case, we can change the position of the rows and at the end we can recover a unitary matrix corresponding to the initial one by changing the rows back).

Now, let us create a circulant matrix by taking an element-wise square root of $B$ and providing phases $\alpha$, $\beta$, $\gamma \in [0,2\pi)$:
\begin{equation}
    M = \begin{pmatrix}
    \sqrt{a} & e^{i\alpha}\sqrt{b} & e^{i\beta}\sqrt{c} & e^{i\gamma}\sqrt{d} \\
    e^{i\gamma}\sqrt{d} & \sqrt{a} & e^{i\alpha}\sqrt{b} & e^{i\beta}\sqrt{c}\\
    e^{i\beta}\sqrt{c} & e^{i\gamma}\sqrt{d} & \sqrt{a} & e^{i\alpha}\sqrt{b} \\
    e^{i\alpha}\sqrt{b} & e^{i\beta}\sqrt{c} & e^{i\gamma}\sqrt{d}  & \sqrt{a}
    \end{pmatrix}.
\end{equation}
Our aim is to to show that there exists a choice of angles such that $MM^\dagger =\mathbbm{1}$, and so that $B$ is a circulant unistochastic matrix (in fact, since $M$ is circulant, we will prove that $B$ is a doubly circulant unistochastic matrix).

Due to symmetries of a circulant matrix, $MM^\dagger$ is entirely defined by a complex number $m$ and a real number $n$
\begin{equation}
    MM^\dagger = \begin{pmatrix}
    1 & m^* & n & m \\
    m & 1 & m^* & n\\
    n & m & 1 & m^* \\
    m^* & n & m  & 1
    \end{pmatrix},
\end{equation}
where 
\begin{equation}\label{eq:m_proof}
    m = e^{i\gamma}\sqrt{ad} + e^{-i\alpha}\sqrt{ab} + e^{i(\alpha-\beta)}\sqrt{bc} + e^{i(\beta-\gamma)}\sqrt{cd}
\end{equation}
and 
\begin{align}\label{eq:n_proof}
    n& = e^{i\beta}\sqrt{ac} + e^{i(\gamma-\alpha)}\sqrt{bd} + e^{-i\beta}\sqrt{ac} + e^{i(\alpha - \gamma)}\sqrt{bd}  \non
    & = 2 \sqrt{ac}\cos{\beta} + 2 \sqrt{bd}\cos{(\alpha - \gamma)}.
\end{align}

If $m, n$ are to be zero, using Eq.~(\ref{eq:n_proof}) we obtain that $\alpha = \gamma + \arccos{(-\sqrt{\frac{ac}{bd}}\cos{\beta})} = \gamma + f(\beta)$ with new notation $f(\beta)$.
For our purposes $\arccos: [-1,1] \to [0,\pi]$. We can insert this value into Eq.~(\ref{eq:m_proof}) and obtain
\begin{align}
    m =& e^{i\gamma}\bigg(\sqrt{ad}+e^{i(f(\beta)-\beta)}\sqrt{bc}\bigg) \non
   & +e^{-i\gamma}\bigg(e^{ - if(\beta))}\sqrt{ab} + e^{i\beta}\sqrt{cd}\bigg).
\end{align}
The value of $m$ will also be zero if
\begin{equation}\label{eq:gamma_proof}
     e^{2i\gamma} = - \frac{e^{ - if(\beta))}\sqrt{ab} + e^{i\beta}\sqrt{cd}}{e^{i(f(\beta)-\beta)}\sqrt{bc}+\sqrt{ad}}.
\end{equation}

On the whole, if Eq.~(\ref{eq:gamma_proof}) has solutions (the right-hand side has modulus 1), then there exist angles for which $M$ is a unitary matrix. In order to show that this is always the case, let us consider properties of a function $F: [\pi/2,3\pi/2] \to [0,2\pi)^2$ defined by 
\begin{equation}
    F(x) = \big(x + \arccos{(\eta \cos{x})},x - \arccos{(\eta \cos{x})}\big),
\end{equation}
with $\eta = -\sqrt{\frac{ac}{bd}} \in [-1,0]$ since we have chosen $ac\leq bd$.
The important property of this function is that for these values of $\eta$ it is continuous with $F(\pi/2) = (\pi,0)$ and $F(3\pi/2) = (0,\pi)$.

Now we define a function $G: [0,2\pi)^2 \to \mathbb{R}$ with positive parameters $a, b, c, d \leq 1$, defined as
\begin{equation}
    G(x,y) = |\sqrt{ab} + e^{ix}\sqrt{cd} | - |\sqrt{bc}+e^{iy}\sqrt{ad}|.
\end{equation}
Since $G$ is continuous, $G \circ F$ is continuous with the values of opposite signs at the boundaries of the domain:
\begin{subequations}
\begin{align}
    G(F(\pi/2)) =&  |\sqrt{ab} - \sqrt{cd} | - |\sqrt{bc}+\sqrt{ad}| \non
    =& \max{(\sqrt{ab},\sqrt{cd})}-\min{(\sqrt{ab},\sqrt{cd})}\non
    &- \sqrt{bc}- \sqrt{ad} \leq 0,
\end{align}
\begin{align}
    G(F(3\pi/2)) =&  |\sqrt{ab} + \sqrt{cd} | - |\sqrt{bc} - \sqrt{ad}| \non
    =& \min{(\sqrt{bc},\sqrt{ad})} - \max{(\sqrt{bc},\sqrt{ad})}\non
    &+\sqrt{ab} + \sqrt{cd}  \geq 0,
\end{align}
\end{subequations}
where inequalities hold if $B$ is a bracelet matrix.

Using Bolzano's theorem, we now deduce that there exists some value of $\beta$ for which $G(F(\beta)) =0$.
The same $\beta$ yields a fraction of modulus 1 in Eq.~(\ref{eq:gamma_proof}), thus providing us with $\gamma$ and $\alpha = \gamma+ f(\beta)$ for which $M$ is a unitary matrix.

Finally, knowing that $\C_4\cap\U_4$ coincides with $\C_4\cap\L_4$, let us prove that $\C_4\cap\U_4$ is closed under multiplication by circulant factorisable matrices $\C_4\cap\F_4$ and is star-shaped with respect to the flat matrix~$W_4$. To prove the first statement, simply note that a product of two circulant matrices is still circulant and, by Theorem~\ref{thm:bracelet}, a product of a bracelet matrix with a factorisable matrix is bracelet. To prove the second one, note that a convex sum of a flat matrix with a circulant matrix is circulant and, by Theorem~\ref{thm:star}, the set of bracelet matrices is star-shaped.
\end{proof}

\begin{figure}[t]
	\begin{tikzpicture}
		\node (myfirstpic) at (0,0) {\includegraphics[width=0.65\columnwidth]{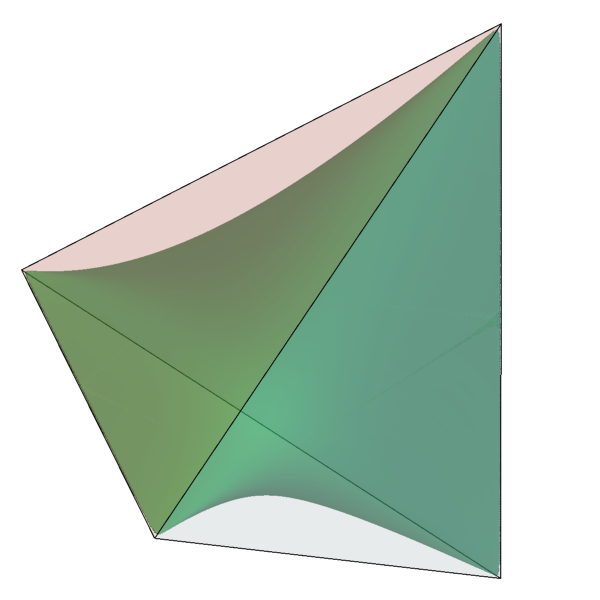}};		

		\node at (-0.23\columnwidth,-0.3\columnwidth) {\small\color{black}$\iden_4$};
		\node at (0.28\columnwidth,-0.32\columnwidth) {\small\color{black}$\Pi_4$};
		\node at (-0.32\columnwidth,0.1\columnwidth) {\small\color{black}$\Pi_4^2$};
		\node at (0.28\columnwidth,0.32\columnwidth) {\small\color{black}$\Pi_4^3$};		

		\node [label={$\Pi_4=\begin{pmatrix}0 &1 &0 &0\\0&0&1&0\\0&0&0&1\\1&0&0&0\end{pmatrix}$}] at (-0.2\columnwidth,0.2\columnwidth){};
	\end{tikzpicture}
	\caption{\label{fig:tetra} \textbf{Circulant unistochastic matrices of size $d=4$.} The set of $4\times 4$ circulant matrices can be represented by a tetrahedron with vertices corresponding to four circulant permutation matrices $\Pi_4^n$, and the points inside to convex combinations of these extremal matrices. The subset of circulant matrices that additionally satisfy the bracelet condition (so, according to Theorem~\ref{thm:4x4}, the set of circulant unistochastic matrices $\C_4\cap\U_4$) is presented as the non-convex, green shape inside the tetrahedron. The edges connecting $\iden_4$ with $\Pi_4^2$ and $\Pi_4$ with $\Pi_4^3$ correspond to complementary permutation matrices~\cite{AYC91,rajchel2018unistochastic} and are unistochastic. The only remaining points at which the set of circulant unistochastic matrices touches the surface of the tetrahedron are given by the heights connecting the midpoints of these edges with the other two permutation matrices.}
\end{figure}

The geometric structure of $\C_4\cap\U_4$ captured by Theorem~\ref{thm:4x4} is illustrated in Fig.~\ref{fig:tetra}. Notice that the neighbourhood of $W_4$ when restricted to circulant matrices is unistochastic, which is not the case if we consider the whole Birkhoff polytope $\B_4$~\cite{bengtsson2005birkhoff}. Furthermore, as with $3\times 3$ unitary matrices, if Conjecture~\ref{conj:bracelet} were true we could introduce a new product operation $\times_c$ between circulant unitary matrices of size 4 defined via the product of the corresponding unistochastic matrices. 

% -------------------------------------------------------------

\subsection{Doubly circulant unistochastic matrices}

We will now take a closer look at the set of circulant unistochastic matrices and doubly circulant unistochastic matrices. Let us start with the following result.
\begin{lemma}[Circulant and doubly circulant sets]\label{lemma:circulant}
	In dimensions $d\leq 4$ the sets of circulant unistochastic matrices, $\C_d\cap\U_d$, coincide with the sets of doubly circulant unistochastic matrices $\C_d^2$.
\end{lemma}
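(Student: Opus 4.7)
The inclusion $\C_d^2 \subseteq \C_d \cap \U_d$ is immediate from the definitions: a circulant bistochastic matrix that admits a circulant unitary realization is, in particular, unistochastic. The substantive part of the lemma is the reverse inclusion $\C_d \cap \U_d \subseteq \C_d^2$, which the plan is to establish separately for each $d \in \{1,2,3,4\}$.

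The cases $d = 1$ and $d = 2$ are trivial; in particular, for $d = 2$ every bistochastic matrix is already circulant, and the explicit parametrization in Eq.~\eqref{eq:2x2_uni} exhibits a circulant unitary preimage on the nose.

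For $d = 3$, the key input is the known equality $\U_3 = \L_3$, which reduces unistochasticity of a circulant $B$ with first row $(a_0, a_1, a_2)$ to the bracelet conditions; on a circulant matrix these collapse to the single triangle inequality among $r_j := \sqrt{a_j a_{(j+1)\bmod 3}}$. To construct a circulant unitary with first row $c_j = \sqrt{a_j}\,e^{i\alpha_j}$, I would observe that by circulant symmetry all row-orthogonality conditions reduce to the single complex equation
\begin{equation}
    r_1 e^{i\theta_1} + r_2 e^{i\theta_2} + r_3 e^{i\theta_3} = 0,
\end{equation}
where the $\theta_j$ are differences of the phases $\alpha_j$ and automatically satisfy $\theta_1 + \theta_2 + \theta_3 = 0$. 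The triangle inequality guarantees the existence of angles $\psi_j$ closing the triangle without constraint; replacing each $\psi_j$ by $\psi_j - (\psi_1+\psi_2+\psi_3)/3$ yields a solution compatible with the circulant constraint, from which the phases $\alpha_j$ are recovered up to an irrelevant global phase.

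For $d = 4$, almost all of the work has already been carried out inside the proof of Theorem~\ref{thm:4x4}: that argument constructs an explicit circulant ansatz $M$ with $|M_{jk}|^2 = B_{jk}$ and, via a Bolzano-type continuity argument in the three phases $\alpha,\beta,\gamma$, shows that these phases can be chosen so that $MM^\dagger = \iden_4$. This circulant unitary $M$ is precisely the witness that $B \in \C_4^2$, so I only need to cite that construction to close the $d=4$ case. The one genuinely non-trivial step in the whole plan is the compatibility check in $d = 3$ between the triangle-closure equation and the circulant constraint on phases, and this is resolved by the rotational symmetry of the former.
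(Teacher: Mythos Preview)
Your proof is correct. For $d\leq 2$ and $d=4$ you follow the paper exactly, invoking Eq.~\eqref{eq:2x2_uni} and the circulant-ansatz construction inside the proof of Theorem~\ref{thm:4x4}, respectively.

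For $d=3$ your argument takes a genuinely different route from the paper's. The paper starts from an arbitrary dephased unitary $U$ realising $B$ (whose existence is the hypothesis $B\in\U_3$), reads off its two free phases $x,y$, and then writes down an explicit circulant matrix $V$ whose entries carry phases that are fixed linear combinations of $x$ and $y$; unitarity of $V$ is relegated to ``a straightforward calculation''. You instead bypass any pre-existing unitary: invoking $\U_3=\L_3$, you reduce membership in $\C_3\cap\U_3$ to the triangle inequality among the $r_j$, close the triangle with unconstrained phases $\psi_j$, and then use the global-rotation invariance of the equation $\sum_j r_j e^{i\psi_j}=0$ to subtract the mean and enforce the circulant constraint $\sum_j\theta_j=0$. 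Your route is more self-contained and conceptually cleaner, since it avoids the phase-matching verification the paper leaves implicit; the paper's route, on the other hand, yields an explicit formula expressing the circulant phases in terms of those of \emph{any} unitary realisation of $B$, which is a slightly stronger byproduct.
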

\begin{proof}
	For $d=2$, the circulant unitary matrix corresponding to a general circulant unistochastic matrix is given by Eq.~\eqref{eq:2x2_uni}. For $d=3$, the circulant bistochastic matrix $B$ defined by a column vector \mbox{$[a,b,c]$} with \mbox{$c=1-a-b$} is unistochastic if and only if there exists a unitary matrix $U$ in the dephased form
	\begin{equation}
	U=\begin{pmatrix}
		\sqrt{a} & \sqrt{b} &\sqrt{c}\\
		\sqrt{c} & e^{ix}\sqrt{a}& e^{i\tilde{x}}\sqrt{b}\\
		\sqrt{b} & e^{iy}\sqrt{c} & e^{i\tilde{y}}\sqrt{a}
	\end{pmatrix}.
	\end{equation}
	A straightforward calculation then shows that the following matrix,
	\begin{equation}
		V=\begin{pmatrix}
			e^{i\frac{2x}{3}}\sqrt{a} & e^{i\frac{x-y}{3}}\sqrt{b} & e^{i\frac{y}{3}}\sqrt{c}\\
			e^{i\frac{y}{3}}\sqrt{c} & e^{i\frac{2x}{3}}\sqrt{a}& e^{i\frac{x-y}{3}}\sqrt{b}\\
			e^{i\frac{x-y}{3}}\sqrt{b} & e^{i\frac{y}{3}}\sqrt{c} & e^{i\frac{2x}{3}}\sqrt{a}
		\end{pmatrix},
	\end{equation}
	is circulant and unitary, and  corresponds to the same unistochastic matrix $B$. Finally, for $d=4$, the proof of Theorem~\ref{thm:4x4} contains the construction of $4\times 4$ unitary circulant matrix for every circulant unistochastic matrix of size 4.
\end{proof}

Observation from the above lemma suggests the following conjecture.
\begin{conjecture}
	The sets $\C_d\cap\U_d$ and $\C_d^2$ coincide for all $d$, i.e., for every circulant unistochastic matrix it is possible to find a corresponding circulant unitary matrix.
\end{conjecture}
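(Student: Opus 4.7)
The plan is to reduce the conjecture to a question about finding phases for the discrete Fourier transform of a vector with prescribed magnitudes. First I would observe that a $d\times d$ circulant unitary matrix $V$ is diagonalised by the Fourier matrix $F$: if the first column of $V$ is $(u_0,\ldots,u_{d-1})^T$, then $V = F\,\mathrm{diag}(\hat{u})\,F^\dagger$ with $\hat{u}$ the DFT of that column. Unitarity of $V$ is equivalent to $|\hat{u}_k|=1$ for all $k$, while the amplitude constraint reads $|u_j|^2 = B_{j,0}$ for the given circulant unistochastic $B$. Hence the conjecture reduces to the following: for every probability vector $(b_0,\ldots,b_{d-1})$ whose associated circulant matrix is unistochastic, do there exist phases $\phi_0,\ldots,\phi_{d-1}\in[0,2\pi)$ such that the DFT of $\bigl(\sqrt{b_j}\,e^{i\phi_j}\bigr)_j$ has all entries of unit modulus? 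By Parseval's identity the numbers $|\hat{u}_k|^2$ automatically sum to $d$, so one is seeking phases that make this spectrum perfectly \emph{flat}.

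My strategy would combine two complementary ideas with a continuity/degree-theoretic flavour. First, a deformation approach: using Theorem~\ref{thm:star}, the set of bracelet matrices is star-shaped with respect to $W_d$, which itself is trivially doubly circulant (realised by the Fourier matrix $F$ whose entries all have modulus $1/\sqrt{d}$). One could attempt to lift the star-shape segment from $W_d$ to any circulant unistochastic $B$ to a continuous path of circulant unitary matrices by invoking an implicit-function argument on the $d$ unitarity constraints $|\hat{u}_k(\phi)|=1$, controlling when the Jacobian is non-degenerate. Second, a symmetrisation approach: given any unitary $U$ realising $B$, the average $\tilde{U}=\frac{1}{d}\sum_{k=0}^{d-1}\Pi_d^{k}\,U\,\Pi_d^{-k}$ commutes with $\Pi_d$ and is thus circulant, but typically not unitary; one would then apply the polar decomposition $\tilde{U}=V|\tilde{U}|$ and try to show that the unitary factor $V$ retains the amplitude matrix $B$ (or can be post-processed to do so). A third, more pedestrian route, mirroring the proof of Theorem~\ref{thm:4x4}, would be to parametrise the $d-1$ free phases explicitly, construct a continuous map from a $(d-1)$-torus to a closed subset of $\mathbb{R}^d$ whose zeros encode unitarity, and establish surjectivity onto the flat point via a Brouwer- or Bolzano-type argument with the bracelet inequalities furnishing the correct boundary signs.

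The hard part will be controlling the non-convex geometry imposed by the simultaneous constraints $|\hat{u}_k|=1$. This is essentially a compatibility problem between two tori (phases of the signal versus prescribed unit moduli of its DFT) whose solvability is known to be delicate, connecting to classical signal-design questions such as Chu and Zadoff sequences and flat polynomials on the unit circle. The symmetrisation attempt will in general distort amplitudes when one restores unitarity by polar correction, and the implicit-function lift may fail at points where the phase Jacobian drops rank -- generic obstructions that any proof must navigate. I expect the genuine obstacle to be isolating a property of $B$ strictly stronger than the bracelet conditions but implied by unistochasticity that certifies solvability of this flat-Fourier problem. A prudent preliminary step, given that the conjecture has been verified only for $d\leq 4$ in Lemma~\ref{lemma:circulant}, would be to search numerically for counterexamples at $d=5,6$ before committing to a full proof attempt; if none are found, the structural evidence for an inductive strategy leveraging the geometric picture of Section~\ref{sec:bracelet_geo} would be much strengthened.
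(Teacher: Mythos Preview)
The statement you are attempting to prove is explicitly a \emph{Conjecture} in the paper, not a theorem: the paper verifies it only for $d\le 4$ (via the lemma on circulant and doubly circulant sets, whose proof is case-by-case) and lists the general case among the open problems in the conclusions. There is therefore no proof in the paper to compare your proposal against.

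Your reduction to a flat-DFT phase-retrieval problem is correct and is the natural reformulation. But none of your three strategies is a proof, and each has a concrete obstruction that you partly anticipate. For the deformation approach, Theorem~\ref{thm:star} gives star-shapedness of $\L_d$, not of $\U_d$ or $\C_d^2$; for $d\ge 4$ these differ (the paper stresses that $W_4$ lies on $\partial\U_4$), so the segment from $W_d$ to $B$ need not remain in the set you want, and even if it did the implicit-function lift can fail at rank-drop points of the phase Jacobian. The symmetrisation-plus-polar idea has the failure mode you name: taking the unitary factor of the averaged matrix does not preserve the entrywise moduli in general, and there is no mechanism offered to repair them. The Bolzano-style route is precisely what the paper carries out for $d=4$, where after eliminating one phase a single real-valued continuous function remains and the bracelet inequalities supply the boundary signs; for $d\ge 5$ you face $d-1$ phases against $d-1$ real constraints on a torus, so an intermediate-value argument no longer applies and you would need a genuine topological-degree computation, with no indication that the bracelet conditions (which are strictly weaker than unistochasticity for $d\ge 4$) control the boundary behaviour.

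Your closing remark is the honest summary: this is an open problem, and a numerical search for counterexamples at $d=5,6$ is the appropriate first move before investing in any of these lines.
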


We now proceed to characterising the spectrum of doubly circulant unistochastic matrices. Note that this way we also characterise the spectrum of circulant unistochastic matrices with $d\leq 4$ (and for all $d$ if the above conjecture holds). The support of the spectra of unistochastic matrices of order $d$ was studied in \cite{ZKSS03}, where it was shown that it includes all $k$--hypocycloids with $k=2,\dots d$, and some regions between the $k$-th roots of identity. This structure is analogous to the spectra of stochastic matrices, restricted by smooth interpolating bounds of Karpelevich \cite{Ka51}, later simplified by Djokovi{\v c} \cite{Dj90}. For doubly circulant unistochastic matrices we have the following result that we illustrate in Fig.~\ref{fig:hypo}.

\begin{figure}[t]
	\begin{center}
		\includegraphics[width=0.75\columnwidth]{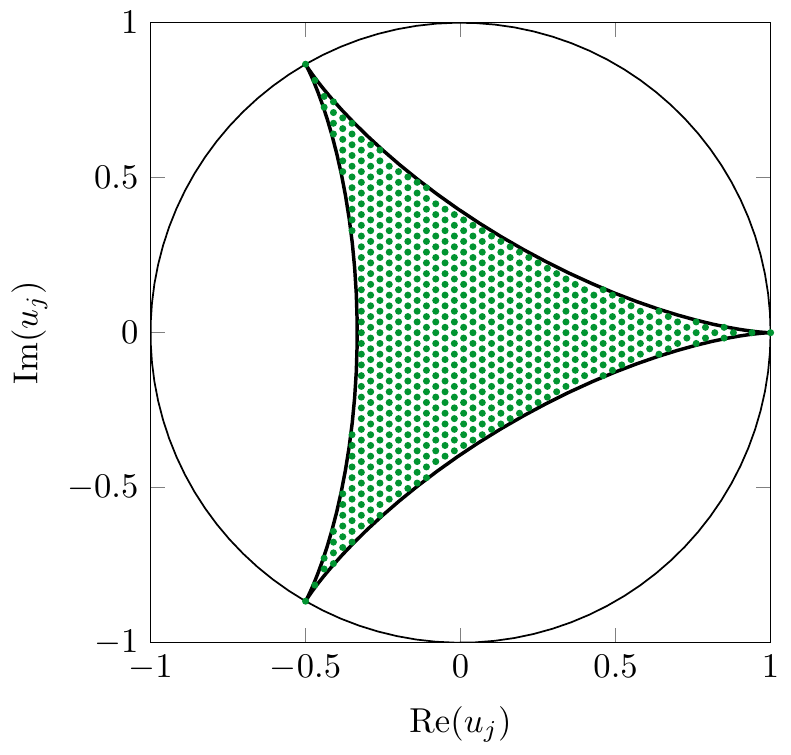}
		\includegraphics[width=0.75\columnwidth]{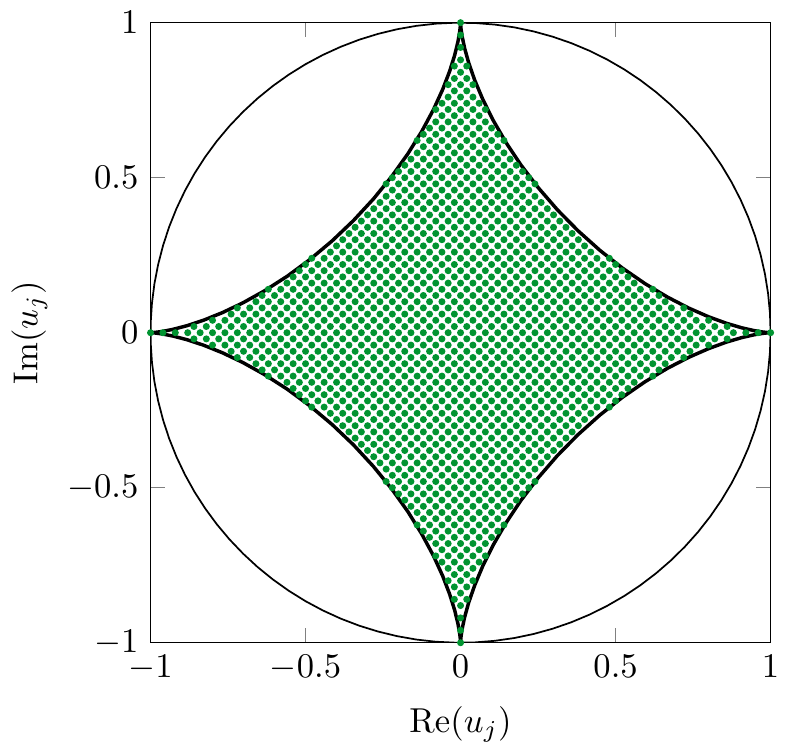}
	\end{center}
	\caption{\textbf{Eigenvalues of doubly circulant unistochastic matrices.} Eigenvalues of $3\times 3$ (top) and $4\times 4$ (bottom) doubly circulant unistochastic matrices in the complex plane (green dots) together with
		deltoid and astroid, i.e. 3- and 4-hypocycloids, respectively. 
		The plotted set of matrices was obtained by discretising 3- and 4-dimensional probability vectors describing a given circulant matrix using a grid of size 0.02, and verifying whether the resulting matrix is a bracelet matrix.\label{fig:hypo}}
\end{figure}

\begin{theorem}[Eigenvalues of $\C_d^2$] 
	The eigenvalues of a $d\times d$ doubly circulant unistochastic matrix $B$ lie inside a unit $d$-hypocycloid $H_d$ on a complex plane, i.e., inside the star-shaped region with the boundary parametrised by
	\begin{subequations}
		\begin{align}
		x(\theta)&=\frac{d-1}{d}\cos\theta+\frac{1}{d}\cos((d-1)\theta),\label{eq:hypo1}\\
		y(\theta)&=\frac{d-1}{d}\sin\theta-\frac{1}{d}\sin((d-1)\theta).\label{eq:hypo2}
		\end{align} 
	\end{subequations}
\end{theorem}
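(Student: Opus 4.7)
The plan is to diagonalise both the circulant unitary $U$ and the resulting circulant unistochastic matrix $B$ in the Fourier basis. If $U$ has first column $(u_0,\dots,u_{d-1})$ and $\omega = e^{2\pi i/d}$, then the eigenvalues of $U$ are $\mu_k = \sum_j u_j \omega^{jk}$, and unitarity forces $|\mu_k|=1$. Since $B$ is itself circulant with first column $(|u_0|^2,\dots,|u_{d-1}|^2)$, its eigenvalues are $\lambda_m = \sum_j |u_j|^2 \omega^{jm}$. Substituting the inverse transform $u_j = \tfrac{1}{d}\sum_k \mu_k \omega^{-jk}$ and collapsing the inner sum over $j$ to a Kronecker delta yields
\[
\lambda_m = \frac{1}{d}\sum_{l=0}^{d-1}\mu_{l+m}\bar\mu_l.
\]

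Setting $\nu_l := \mu_{l+m}\bar\mu_l$, each $\nu_l$ lies on the unit circle and $\prod_l \nu_l = \bigl|\prod_l \mu_l\bigr|^2 = 1$. The theorem therefore reduces to the purely geometric claim that the average $\tfrac{1}{d}\sum_l \nu_l$ of any $d$ unit complex numbers with product one lies in $H_d$. That $H_d$ is sharp is immediate: the configuration $\nu_0 = \cdots = \nu_{d-2} = e^{i\theta}$, $\nu_{d-1} = e^{-i(d-1)\theta}$ satisfies the constraint and reproduces exactly the boundary parametrisation of Eqs.~\eqref{eq:hypo1}--\eqref{eq:hypo2}.

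The main obstacle is the containment itself. I would establish it by a critical-value analysis of the smooth map
\[
F(\theta_0,\dots,\theta_{d-1}) = \frac{1}{d}\sum_l e^{i\theta_l}
\]
on the compact $(d-1)$-torus $\{\sum_l \theta_l \equiv 0 \bmod 2\pi\}$. The factorisation $e^{i\theta_l}-e^{i\theta_0} = 2i \sin\tfrac{\theta_l-\theta_0}{2}\, e^{i(\theta_l+\theta_0)/2}$ shows that the real differential of $F$ has rank at most one precisely when the multiset $\{\theta_l\}$ takes at most two distinct values. Enumerating such critical configurations ($r$ phases equal to $\alpha$ and $d-r$ equal to $\beta$, with $r\alpha + (d-r)\beta \equiv 0$) produces a finite family of candidate boundary curves; the outermost, corresponding to $r\in\{1,d-1\}$, is exactly the $d$-hypocycloid, while every other choice yields a curve strictly inside it. Since the image of $F$ is compact and connected, and Sard's theorem forces its topological boundary to lie in the critical values---hence inside the closed hypocycloid region---the image itself is contained in $H_d$, which gives the desired spectral bound for $B$.
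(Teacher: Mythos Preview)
Your reduction is exactly the paper's: both diagonalise $U$ and $B$ simultaneously in the Fourier basis and arrive at
\[
\lambda_m=\frac{1}{d}\sum_{l=0}^{d-1}\mu_{l+m}\bar\mu_l,
\]
which the paper rewrites as $\frac{1}{d}\tra{\Lambda_U\Delta^m\Lambda_U^\dagger(\Delta^\dagger)^m}$, the normalised trace of a matrix in $SU(d)$. At that point the paper simply invokes an external result (Kaiser) stating that such normalised traces fill the $d$-hypocycloid; you instead attempt to prove this geometric fact directly by a critical-point analysis on the torus $\{\sum_l\theta_l\equiv 0\}$.

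Your identification of the critical locus is correct---the restricted differential of $F$ drops rank precisely when the multiset $\{\theta_l\}$ has at most two distinct values---and the $r\in\{1,d-1\}$ branch does parametrise $\partial H_d$. The gap is the clause ``while every other choice yields a curve strictly inside it.'' For each $2\le r\le d-2$ (and each residue class of $r\alpha+(d-r)\beta$ modulo $2\pi$) you get an explicit trochoidal curve $\tfrac{r}{d}e^{i\alpha}+\tfrac{d-r}{d}e^{i\beta}$, and showing that every one of these lies in $H_d$ is not a formality: it is essentially the substance of the result you are trying to prove. Already for $d=5$, $r=2$ one must compare a genuine non-convex curve against the $5$-hypocycloid, and no mechanism in your outline does this. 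Until that comparison is made, the open-mapping step only yields $\partial(\operatorname{Im} F)\subset\{\text{critical values}\}$, which is strictly weaker than $\operatorname{Im} F\subset H_d$. (A side remark: Sard's theorem is not what you want here---regular values in the image are interior simply because a submersion is open; Sard would tell you the critical set has measure zero, which you do not use.) So the plan is sound and matches the paper up to the point where the paper hands off to the literature, but the step you gloss over is exactly the one that carries the weight.
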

\begin{proof}
	Consider $B$ being unistochastic and doubly circulant, i.e., that there exists a circulant unitary $U$ such that \mbox{$B_{ij}=|\!\matrixel{i}{U}{j}\!|^2$}. Introducing a shift matrix $\Delta$,
	\begin{equation}
		\Delta:=\sum_{j=0}^{d-1} \ketbra{j\oplus 1}{j},
	\end{equation}
	with $\oplus$ denoting addition modulo $d$, we have
	\begin{align}
		U=\sum_{j=0}^{d-1} a_j \Delta^j,\quad B=\sum_{j=0}^{d-1} |a_j|^2 \Delta^j.
	\end{align}
	Moreover, as $U$ and $B$ are circulant, they are diagonalised by a Fourier matrix $F$, so that
	\begin{equation}
		U=F \Lambda_U F^\dagger,\quad	B=F \Lambda_B F^\dagger,\quad
	\end{equation}
	where $\Lambda_U$ and $\Lambda_B$ denote the diagonal matrices of the eigenvalues of $U$ and $B$, respectively. 
	
	Now, the eigenvalues of $\Delta$ are given by the $d$-th roots of unity, and so the eigenvalues of $U$ and $B$ are given by:
	\begin{subequations}
	\begin{align}
		u_j&:=[\Lambda_U]_{jj}=\sum_{k=0}^{d-1} a_k \exp\left(i\frac{2 \pi kj}{d}\right),\label{eq:uj}\\
		b_j&:=[\Lambda_B]_{jj}=\sum_{k=0}^{d-1} |a_k|^2 \exp\left(i\frac{2 \pi kj}{d}\right).\label{eq:bj}
	\end{align}
	\end{subequations}
	By applying the inverse Fourier transform to Eq.~\eqref{eq:uj} we get
	\begin{equation}
		a_k=\frac{1}{d}\sum_{j=0}^{d-1} u_j \exp\left(-i\frac{2 \pi kj}{d}\right),
	\end{equation}
	which can be substituted to Eq.~\eqref{eq:bj} to yield
	\begin{align}
		b_k&=\frac{1}{d} \sum_{j=0}^{d-1} u_j u_{j-k}^*.
	\end{align}
	Next, we rewrite the above as
	\begin{align}
		b_k&=\frac{1}{d} \tra{\Lambda_U \Delta^k \Lambda_U^\dagger (\Delta^\dagger)^k}
	\end{align}
	and observe that the determinant of the matrix under the trace is equal to 1. Hence, $b_k$ are given by the average of the eigenvalues of a special unitary matrix and so by the result of Ref.~\cite{kaiser2006mean} they lie on the complex plane within unit $d$-hypocycloid. 
\end{proof}

Finally, we observe the following. Given two doubly circulant unistochastic matrices, $B_1$ and $B_2$, the eigenvalues of their product, $B=B_1 B_2$, are given by the product of their eigenvalues $\Lambda_B=\Lambda_{B_1}\Lambda_{B_2}$. That means that each resulting eigenvalue corresponds to a product of two points within a $d$-hypocycloid $H_d$. However, as we prove below, such a product also lies inside a $d$-hypocycloid. Therefore, even though it is not clear for all $d$ whether doubly circulant unistochastic matrices form a monoid, they nevertheless preserve the structure of their spectrum (constrained to a hypocycloid) under matrix multiplication.

\begin{lemma}
\label{lem18}
	Given any two points lying inside a $d$-hypocycloid, $z_1,z_2\in H_d$, we have $z_1z_2\in H_d$.
\end{lemma}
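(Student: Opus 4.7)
The plan is to apply Kaiser's mean-eigenvalue theorem~\cite{kaiser2006mean} in both directions, exploiting the characterisation of $H_d$ as the image of $SU(d)$ under the normalised-trace map $V\mapsto \tfrac{1}{d}\tra{V}$. By that theorem each $z_i$ is realised as $\tfrac{1}{d}\tra{D_i}$ for some diagonal $D_i\in SU(d)$, so it suffices to construct a single $V\in SU(d)$ with $\tfrac{1}{d}\tra{V}=z_1 z_2$; a second invocation of the same theorem then places this value back in $H_d$.

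The main obstacle is choosing $V$. The obvious combinations do not work: the ordinary product $D_1 D_2\in SU(d)$ has normalised trace $\tfrac{1}{d}\sum_j\alpha_j^{(1)}\alpha_j^{(2)}$, which is a componentwise (Hadamard) product of spectra rather than a product of traces, while the tensor product $D_1\otimes D_2\in SU(d^2)$ does carry the correct trace $d^2 z_1 z_2$ but lives in the wrong dimension and would only yield $z_1 z_2\in H_{d^2}$. The right construction is to conjugate the second diagonal by a complex Hadamard matrix before multiplying; concretely I would take
\begin{equation}
    V:=D_1\,H\,D_2\,H\hc,
\end{equation}
where $H$ is any complex Hadamard matrix of order $d$, for instance the discrete Fourier matrix with $|H_{jk}|^2=1/d$.

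From there the verification is essentially a one-line calculation. First, $V$ belongs to $SU(d)$: it is unitary as a product of unitaries and $\det V=\det D_1\det D_2=1$. Second, the Hadamard property $|H_{jk}|^2=1/d$ forces each diagonal entry of $HD_2H\hc$ to equal $\tfrac{1}{d}\tra{D_2}=z_2$, so the diagonal of $V$ is $(D_1)_{jj}\,z_2$ and $\tra{V}=z_2\tra{D_1}=d z_1 z_2$. Thus $\tfrac{1}{d}\tra{V}=z_1 z_2$, and Kaiser's theorem applied to $V$ gives $z_1 z_2\in H_d$, completing the argument.
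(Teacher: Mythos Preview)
Your argument is correct and takes a genuinely different route from the paper.

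The paper proves the lemma by direct analysis of the hypocycloid boundary: writing $\partial H_d$ in polar form $r(\phi)$, it shows that $-\log r(\phi)$ is concave via an explicit second-derivative computation using the parametrisation in Eqs.~\eqref{eq:hypo1}--\eqref{eq:hypo2}. Concavity (together with non-negativity) gives sub-additivity, i.e.\ $r(\phi_1)r(\phi_2)\le r(\phi_1+\phi_2)$, from which closure of $H_d$ under multiplication is immediate.

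Your approach instead loops back through the $SU(d)$ description of $H_d$: realise each $z_i$ as the normalised trace of a diagonal $D_i\in SU(d)$, then build a single $V=D_1HD_2H\hc\in SU(d)$ whose normalised trace equals $z_1z_2$. The Hadamard conjugation is the key device---it flattens the diagonal of $HD_2H\hc$ to the constant $z_2$, so that multiplying on the left by the diagonal $D_1$ gives $\tra{V}=z_2\tra{D_1}=dz_1z_2$. The check that $\det V=\det D_1\det D_2=1$ is clean, and the Fourier matrix supplies a Hadamard matrix in every dimension.

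Your proof is shorter and more structural, and it resonates nicely with the surrounding material (the Fourier matrix already appears in the proof of the preceding theorem, and Hadamard matrices are thematically natural in this paper). The one extra ingredient you need is the \emph{surjectivity} direction of Kaiser's characterisation---that every point of $H_d$ actually arises as $\tfrac{1}{d}\tra{V}$ for some $V\in SU(d)$---whereas the paper's calculus argument is self-contained and uses only the containment direction. In exchange, the paper's computation yields a quantitative concavity statement about the boundary curve that your construction does not see.
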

\begin{proof}
	First, by denoting the boundary of $H_d$ in polar coordinates by $r(\phi)$, we can express the region $H_d$ as follows
	\begin{equation}
		H_d=\{z\in\mathbb{C} : |z|\leq r(\arg{z}) \}.
	\end{equation} 
	Next, let us assume that the function $-\log r(\phi)$ is sub-additive, meaning
	\begin{equation}
		r(\phi_1)r(\phi_2)\leq r(\phi_1+\phi_2).
	\end{equation}
	For $z_1,z_2\in H_d$, we then have
	\begin{align}
		|z_1z_2| \leq r(\arg z_1) r(\arg z_2)&\leq r(\arg z_1+\arg z_2)\nonumber\\
		&\leq r(\arg(z_1z_2)),
	\end{align}
	and so $z_1z_2\in H_d$. 
	
	Therefore, it remains to show that $-\log r(\phi)$ is sub-additive. Since the studied function is non-negative and defined on a non-negative domain, it is sufficient to prove that it is concave, i.e., that
	\begin{equation}
		\frac{\mathrm{d}^2 (-\log r)}{\mathrm{d}\phi^2} \leq 0.		
	\end{equation}
	To achieve this we will use the chain rule twice. First, we have
	\begin{equation}
	\label{eq:chain_rule}
	\frac{\mathrm{d} (-\log r)}{\mathrm{d}\phi}=\frac{\mathrm{d}(-\log r)}{\mathrm{d}\theta}\frac{\mathrm{d}\theta}{\mathrm{d}\phi}.
	\end{equation}	
	Using Eqs.~\eqref{eq:hypo1}-\eqref{eq:hypo2}, we find that $r$ as a function of parameter $\theta$ is given by
	\begin{align}
		r(\theta) =&\sqrt{x^2(\theta)+y^2(\theta)}\nonumber\\
		=&\sqrt{\frac{2+(d-2)d+2(d-1)\cos(d\theta)}{d^2}}. 
	\end{align}
	Employing the same equations, we find that
	\begin{align}
		\frac{\mathrm{d}\phi}{\mathrm{d}\theta}=&\frac{\mathrm{d}\arctan(y(\theta)/x(\theta))}{\mathrm{d}\theta}\nonumber\\
		=&\frac{2(d-2)(d-1)\sin^2\left(\frac{d\theta}{2}\right)}{2+(d-2)d+2(d-1)\cos\left(d\theta\right)}.
	\end{align}
	Substituting the above to Eq.~\eqref{eq:chain_rule}, taking the second derivative over $\phi$ and substituting again, one ends up with
	\begin{equation}
		\frac{\mathrm{d}^2 (-\log r)}{\mathrm{d}\phi^2} =- \frac{d^2(2\!+\!(d\!-\!2)d\!+\!2(d\!-\!1)\cos(d\theta))}{4(d\!-\!2)^2(d\!-\!1)\sin^4\left(\frac{d\theta}{2}\right)}.
	\end{equation}
	By direct inspection one can see that the above is always negative for $d\geq 3$. For $d=2$ the 2-hypocycloid is segment connecting -1 and 1, so the product of any two points belonging to $H_2$ also belongs to it; while for $d=1$ it is just a single point 1, trivially satisfying the condition. 
\end{proof}
 
As a final remark, let us note that analysing sets in the complex plane it is convenient to use the notion of their \emph{Minkowski product}~\cite{FMR01}:
\begin{equation}
Z_1 \boxtimes  Z_2 =  
\left\{z: z=z_1 z_2, \ z_1 \in Z_1, \ z_2 \in Z_2 \right\} .
\label{Mprod} 
\end{equation}
Lemma~\ref{lem18} then shows that the set $H_d$ bounded by the $d$-hypocycloid inscribed in the unit disk
is idempotent with respect to the Minkowski product: $H_d \boxtimes  H_d = H_d$. This geometric result implies that the spectrum of a product of two doubly circulant unistochastic matrices belongs to the support of spectra of doubly circulant unistochastic matrices. Thus, this observation is consistent with a conjecture that doubly circulant unistochastic matrices form a monoid. In fact, for $d=3$ it is straightforward to prove that $\C_3\cap\U_3$ forms a monoid. To show this, one simply notes that the bracelet condition for $3 \times 3$ circulant matrices is equivalent to the condition on eigenvalues to lie inside a 3-hypocycloid~\cite{bengtsson2005birkhoff}.

% -------------------------------------------------------------
% SEC. V - CONCLUSIONS AND OUTLOOK
% -------------------------------------------------------------

\section{Conclusions and outlook}
\label{sec:conclusions}

In this paper we have addressed the problem of characterising the set $\U_d$ of unistochastic matrices of size $d$ by introducing and analysing the properties of its superset $\L_d$ of bracelet matrices. We have also investigated the special case of circulant unistochastic matrices, $\C_d\cap \U_d$, and introduced the notion of doubly circulant unistochastic matrices $\C_d^2$ (such elements of $\C_d\cap \U_d$ that can arise from circulant unitary matrices). We proved that $\L_d$ is closed under matrix multiplication by factorisable bistochastic matrices $\F_d$, and is star-shaped with respect to the flat van der Waerden matrix $W_d$. Moreover, we have shown that the spectra of doubly circulant unistochastic matrices of size $d$ lie inside a $d$-hypocycloid. As a result, we have fully characterised the set of circulant unistochastic matrices for $d\leq 4$, noting their geometric properties (star-shaped with respect to $W_d$) and algebraic properties (monoid for $d\leq 3$ and closedness under multiplication by elements of $\F_4$ for $d=4$).

Let us summarise the known properties of matrices belonging to the sets of bistochastic $\B_d$, bracelet $\L_d$, unistochastic $\U_d$, orthostochastic ${\cal O}_d$ and factorizable $\F_d$ matrices of order $d$  (their definitions are provided in Sec.~\ref{sec:preliminaries}). In the simplest case, $d=2$, all sets do coincide, $\B_2=\L_2=\U_2={\cal O}_2=\F_2$. The situation changes already for $d=3$ as $\B_3 \supset \L_3=\U_3 \supset \F_3$, while orthostochastic matrices form the boundary of the unistochastic set, ${\cal O}_3=\partial \U_3$. For $d=4$ one has $\B_4 \supset \L_4 \supset \U_4$, the sets $\U_4$ and $\F_4 \subset \L_4$ are not comparable and there exist matrices from ${\cal O}_4$ inside the interior of $\U_4$. For a larger dimension $d$ we know that $\B_d \supset \L_d \supset \U_d \supset {\cal O}_d$ and these inclusion relations are proper. Finally, with the additional circulant constraint, we also know that the sets of bracelet and unistochastic matrices coincide for $d=4$, $\C_4\cap\L_4=\C_4\cap\U_4$.

Our investigations brought us one step closer to the full understanding of the set of unistochastic matrices $\U_d$, but also created several questions that are left open for future research. First of all, we found that for $d\leq 4$ every circulant unistochastic matrix can be formed from a circulant unitary matrix, i.e., given a unitary $U$ with circulant transition amplitudes, \mbox{$|U_{j+l,k}|=|U_{j,k-l}|$}, it is always possible to find a circulant unitary $V$ with the same transition amplitudes, $|V_{jk}|=|U_{jk}|$. It thus seems worth checking whether this still holds for higher dimensions. Another observation that holds for $d\leq 4$, and should be further investigated for $d>4$, is whether the bracelet condition is sufficient for unistochasticity if we restrict the analysis to the set of circulant matrices. Furthermore, one can try to verify Conjecture~\ref{conj:bracelet}, stating that $\L_d$ forms a monoid, as it would prove a strong algebraic property of the sets $\U_3$ and $\C_4\cap\U_4$, i.e., they would form monoids. Obviously, the star-shapedness of the set of unistochastic matrices $\U_d$ for $d>3$ is still left open, but one may also want to attack a simpler problem restricting to circulant unistochastic matrices (or even to doubly circulant unistochastic matrices). Finally, further investigations on the structure of factorisable bistochastic matrices $\F_d$ is an interesting problem in itself, independent of unistochasticity.

\bigskip

\textbf{Acknowledgements:} It is a pleasure to thank W.~Bruzda for a helpful discussion on circulant Hadamard matrices. We acknowledge financial support by the Foundation for Polish Science through TEAM-NET project (contract no. POIR.04.04.00-00-17C1/18-00). This research was also supported by the 
National Science Center in Poland under the Maestro grant number
 DEC-2015/18/A/ST2/00274.

\bibliography{bib_channels}

%apsrev4-2.bst 2019-01-14 (MD) hand-edited version of apsrev4-1.bst
%Control: key (0)
%Control: author (8) initials jnrlst
%Control: editor formatted (1) identically to author
%Control: production of article title (0) allowed
%Control: page (0) single
%Control: year (1) truncated
%Control: production of eprint (0) enabled
\begin{thebibliography}{45}%
\makeatletter
\providecommand \@ifxundefined [1]{%
 \@ifx{#1\undefined}
}%
\providecommand \@ifnum [1]{%
 \ifnum #1\expandafter \@firstoftwo
 \else \expandafter \@secondoftwo
 \fi
}%
\providecommand \@ifx [1]{%
 \ifx #1\expandafter \@firstoftwo
 \else \expandafter \@secondoftwo
 \fi
}%
\providecommand \natexlab [1]{#1}%
\providecommand \enquote  [1]{``#1''}%
\providecommand \bibnamefont  [1]{#1}%
\providecommand \bibfnamefont [1]{#1}%
\providecommand \citenamefont [1]{#1}%
\providecommand \href@noop [0]{\@secondoftwo}%
\providecommand \href [0]{\begingroup \@sanitize@url \@href}%
\providecommand \@href[1]{\@@startlink{#1}\@@href}%
\providecommand \@@href[1]{\endgroup#1\@@endlink}%
\providecommand \@sanitize@url [0]{\catcode `\\12\catcode `\$12\catcode
  `\&12\catcode `\#12\catcode `\^12\catcode `\_12\catcode `\%12\relax}%
\providecommand \@@startlink[1]{}%
\providecommand \@@endlink[0]{}%
\providecommand \url  [0]{\begingroup\@sanitize@url \@url }%
\providecommand \@url [1]{\endgroup\@href {#1}{\urlprefix }}%
\providecommand \urlprefix  [0]{URL }%
\providecommand \Eprint [0]{\href }%
\providecommand \doibase [0]{https://doi.org/}%
\providecommand \selectlanguage [0]{\@gobble}%
\providecommand \bibinfo  [0]{\@secondoftwo}%
\providecommand \bibfield  [0]{\@secondoftwo}%
\providecommand \translation [1]{[#1]}%
\providecommand \BibitemOpen [0]{}%
\providecommand \bibitemStop [0]{}%
\providecommand \bibitemNoStop [0]{.\EOS\space}%
\providecommand \EOS [0]{\spacefactor3000\relax}%
\providecommand \BibitemShut  [1]{\csname bibitem#1\endcsname}%
\let\auto@bib@innerbib\@empty
%</preamble>
\bibitem [{\citenamefont {Bhatia}(2013)}]{bhatia2013matrix}%
  \BibitemOpen
  \bibfield  {author} {\bibinfo {author} {\bibfnamefont {R.}~\bibnamefont
  {Bhatia}},\ }\href@noop {} {\emph {\bibinfo {title} {Matrix analysis}}},\
  Vol.\ \bibinfo {volume} {169}\ (\bibinfo  {publisher} {Springer Science \&
  Business Media},\ \bibinfo {year} {2013})\BibitemShut {NoStop}%
\bibitem [{\citenamefont {van~der Waerden}(1926)}]{van1926aufgabe}%
  \BibitemOpen
  \bibfield  {author} {\bibinfo {author} {\bibfnamefont {B.~L.}\ \bibnamefont
  {van~der Waerden}},\ }\bibfield  {title} {\bibinfo {title} {Aufgabe 45},\
  }\href@noop {} {\bibfield  {journal} {\bibinfo  {journal} {Jber. Deutsch.
  Math. Verein}\ }\textbf {\bibinfo {volume} {35}},\ \bibinfo {pages} {23}
  (\bibinfo {year} {1926})}\BibitemShut {NoStop}%
\bibitem [{\citenamefont {Bengtsson}(2004)}]{Be04}%
  \BibitemOpen
  \bibfield  {author} {\bibinfo {author} {\bibfnamefont {I.}~\bibnamefont
  {Bengtsson}},\ }\bibfield  {title} {\bibinfo {title} {The importance of being
  unistochastic},\ }\href {https://arxiv.org/abs/quant-ph/0403088} {\bibfield
  {journal} {\bibinfo  {journal} {arXiv:quant-ph/0403088}\ } (\bibinfo {year}
  {2004})},\ \bibinfo {note} {proceedings of Second Vaxjo Conference of Quantum
  Theory}\BibitemShut {NoStop}%
\bibitem [{\citenamefont {Pako{\'n}ski}\ \emph {et~al.}(2001)\citenamefont
  {Pako{\'n}ski}, \citenamefont {{\.Z}yczkowski},\ and\ \citenamefont
  {Ku{\'s}}}]{PKZ01}%
  \BibitemOpen
  \bibfield  {author} {\bibinfo {author} {\bibfnamefont {P.}~\bibnamefont
  {Pako{\'n}ski}}, \bibinfo {author} {\bibfnamefont {K.}~\bibnamefont
  {{\.Z}yczkowski}},\ and\ \bibinfo {author} {\bibfnamefont {M.}~\bibnamefont
  {Ku{\'s}}},\ }\bibfield  {title} {\bibinfo {title} {Classical 1d maps,
  quantum graphs and ensembles of unitary matrices},\ }\href
  {https://doi.org/10.1088/0305-4470/34/43/313} {\bibfield  {journal} {\bibinfo
   {journal} {J. Phys. A}\ }\textbf {\bibinfo {volume} {34}},\ \bibinfo {pages}
  {9303} (\bibinfo {year} {2001})}\BibitemShut {NoStop}%
\bibitem [{\citenamefont {Pako{\'n}ski}\ \emph {et~al.}(2003)\citenamefont
  {Pako{\'n}ski}, \citenamefont {Tanner},\ and\ \citenamefont
  {{\.Z}yczkowski}}]{PTZ03}%
  \BibitemOpen
  \bibfield  {author} {\bibinfo {author} {\bibfnamefont {P.}~\bibnamefont
  {Pako{\'n}ski}}, \bibinfo {author} {\bibfnamefont {G.}~\bibnamefont
  {Tanner}},\ and\ \bibinfo {author} {\bibfnamefont {K.}~\bibnamefont
  {{\.Z}yczkowski}},\ }\bibfield  {title} {\bibinfo {title} {Families of
  line-graphs and their quantization},\ }\href
  {https://doi.org/10.1023/A:1023012502046} {\bibfield  {journal} {\bibinfo
  {journal} {J. Stat. Phys.}\ }\textbf {\bibinfo {volume} {111}},\ \bibinfo
  {pages} {1331} (\bibinfo {year} {2003})}\BibitemShut {NoStop}%
\bibitem [{\citenamefont {Ambainis}(2003)}]{ambainis2003quantum}%
  \BibitemOpen
  \bibfield  {author} {\bibinfo {author} {\bibfnamefont {A.}~\bibnamefont
  {Ambainis}},\ }\bibfield  {title} {\bibinfo {title} {Quantum walks and their
  algorithmic applications},\ }\href
  {https://doi.org/10.1142/S0219749903000383} {\bibfield  {journal} {\bibinfo
  {journal} {Int. J. Quantum Inf}\ }\textbf {\bibinfo {volume} {01}},\ \bibinfo
  {pages} {507} (\bibinfo {year} {2003})}\BibitemShut {NoStop}%
\bibitem [{\citenamefont {Meyer}(1996)}]{meyer1996quantum}%
  \BibitemOpen
  \bibfield  {author} {\bibinfo {author} {\bibfnamefont {D.~A.}\ \bibnamefont
  {Meyer}},\ }\bibfield  {title} {\bibinfo {title} {From quantum cellular
  automata to quantum lattice gases},\ }\href
  {https://doi.org/https://doi.org/10.1007/BF02199356} {\bibfield  {journal}
  {\bibinfo  {journal} {J. Stat. Phys.}\ }\textbf {\bibinfo {volume} {85}},\
  \bibinfo {pages} {551–574} (\bibinfo {year} {1996})}\BibitemShut {NoStop}%
\bibitem [{\citenamefont {Aharonov}\ \emph {et~al.}(1993)\citenamefont
  {Aharonov}, \citenamefont {Davidovich},\ and\ \citenamefont
  {Zagury}}]{aharonov1993quantum}%
  \BibitemOpen
  \bibfield  {author} {\bibinfo {author} {\bibfnamefont {Y.}~\bibnamefont
  {Aharonov}}, \bibinfo {author} {\bibfnamefont {L.}~\bibnamefont
  {Davidovich}},\ and\ \bibinfo {author} {\bibfnamefont {N.}~\bibnamefont
  {Zagury}},\ }\bibfield  {title} {\bibinfo {title} {Quantum random walks},\
  }\href {https://doi.org/10.1103/PhysRevA.48.1687} {\bibfield  {journal}
  {\bibinfo  {journal} {Phys. Rev. A}\ }\textbf {\bibinfo {volume} {48}},\
  \bibinfo {pages} {1687} (\bibinfo {year} {1993})}\BibitemShut {NoStop}%
\bibitem [{\citenamefont {Korzekwa}\ \emph {et~al.}(2018)\citenamefont
  {Korzekwa}, \citenamefont {Czach{\'o}rski}, \citenamefont {Pucha{\l}a},\ and\
  \citenamefont {{\.Z}yczkowski}}]{korzekwa2018coherifying}%
  \BibitemOpen
  \bibfield  {author} {\bibinfo {author} {\bibfnamefont {K.}~\bibnamefont
  {Korzekwa}}, \bibinfo {author} {\bibfnamefont {S.}~\bibnamefont
  {Czach{\'o}rski}}, \bibinfo {author} {\bibfnamefont {Z.}~\bibnamefont
  {Pucha{\l}a}},\ and\ \bibinfo {author} {\bibfnamefont {K.}~\bibnamefont
  {{\.Z}yczkowski}},\ }\bibfield  {title} {\bibinfo {title} {Coherifying
  quantum channels},\ }\href {https://doi.org/10.1088/1367-2630/aaaff3}
  {\bibfield  {journal} {\bibinfo  {journal} {New J. Phys.}\ }\textbf {\bibinfo
  {volume} {20}},\ \bibinfo {pages} {043028} (\bibinfo {year}
  {2018})}\BibitemShut {NoStop}%
\bibitem [{\citenamefont {Bigi}\ and\ \citenamefont {Sanda}(2009)}]{BS09}%
  \BibitemOpen
  \bibfield  {author} {\bibinfo {author} {\bibfnamefont {I.~I.}\ \bibnamefont
  {Bigi}}\ and\ \bibinfo {author} {\bibfnamefont {A.~I.}\ \bibnamefont
  {Sanda}},\ }\href@noop {} {\emph {\bibinfo {title} {C{P} Violation, 2nd
  Ed.}}}\ (\bibinfo  {publisher} {Cambridge University Press},\ \bibinfo {year}
  {2009})\BibitemShut {NoStop}%
\bibitem [{\citenamefont {Di{\c t}{\u a}}(2006)}]{Di06}%
  \BibitemOpen
  \bibfield  {author} {\bibinfo {author} {\bibfnamefont {P.}~\bibnamefont
  {Di{\c t}{\u a}}},\ }\bibfield  {title} {\bibinfo {title} {Separation of
  unistochastic matrices from the double stochastic ones: {R}ecovery of a $ 3
  \times 3$ unitary matrix from experimental data},\ }\href
  {https://doi.org/10.1063/1.2229424} {\bibfield  {journal} {\bibinfo
  {journal} {J. Math. Phys.}\ }\textbf {\bibinfo {volume} {47}},\ \bibinfo
  {pages} {083510} (\bibinfo {year} {2006})}\BibitemShut {NoStop}%
\bibitem [{\citenamefont {Jarlskog}(1985)}]{jarlskog}%
  \BibitemOpen
  \bibfield  {author} {\bibinfo {author} {\bibfnamefont {C.}~\bibnamefont
  {Jarlskog}},\ }\bibfield  {title} {\bibinfo {title} {Commutator of the quark
  mass matrices in the standard electroweak model and a measure of maximal
  $\mathrm{CP}$ nonconservation},\ }\href
  {https://doi.org/10.1103/PhysRevLett.55.1039} {\bibfield  {journal} {\bibinfo
   {journal} {Phys. Rev. Lett.}\ }\textbf {\bibinfo {volume} {55}},\ \bibinfo
  {pages} {1039} (\bibinfo {year} {1985})}\BibitemShut {NoStop}%
\bibitem [{\citenamefont {Mennessier}\ and\ \citenamefont
  {Nuyts}(1974)}]{nuyts_1974}%
  \BibitemOpen
  \bibfield  {author} {\bibinfo {author} {\bibfnamefont {G.}~\bibnamefont
  {Mennessier}}\ and\ \bibinfo {author} {\bibfnamefont {J.}~\bibnamefont
  {Nuyts}},\ }\bibfield  {title} {\bibinfo {title} {Some unitarity bounds for
  finite matrices},\ }\href {https://doi.org/10.1063/1.1666843} {\bibfield
  {journal} {\bibinfo  {journal} {J. Math. Phys.}\ }\textbf {\bibinfo {volume}
  {15}},\ \bibinfo {pages} {1525} (\bibinfo {year} {1974})}\BibitemShut
  {NoStop}%
\bibitem [{\citenamefont {Land\'e}(1959)}]{lande}%
  \BibitemOpen
  \bibfield  {author} {\bibinfo {author} {\bibfnamefont {A.}~\bibnamefont
  {Land\'e}},\ }\bibfield  {title} {\bibinfo {title} {From dualism to unity in
  quantum mechanics},\ }\href {http://www.jstor.org/stable/685772} {\bibfield
  {journal} {\bibinfo  {journal} {Br. J. Philos. Sci.}\ }\textbf {\bibinfo
  {volume} {10}},\ \bibinfo {pages} {16} (\bibinfo {year} {1959})}\BibitemShut
  {NoStop}%
\bibitem [{\citenamefont {Rovelli}(1996)}]{rovelli}%
  \BibitemOpen
  \bibfield  {author} {\bibinfo {author} {\bibfnamefont {C.}~\bibnamefont
  {Rovelli}},\ }\bibfield  {title} {\bibinfo {title} {Relational quantum
  mechanics},\ }\href {https://doi.org/https://doi.org/10.1007/BF02302261}
  {\bibfield  {journal} {\bibinfo  {journal} {Int. J. Theor. Phys.}\ }\textbf
  {\bibinfo {volume} {35}},\ \bibinfo {pages} {1637–1678} (\bibinfo {year}
  {1996})}\BibitemShut {NoStop}%
\bibitem [{\citenamefont {Korzekwa}\ and\ \citenamefont
  {Lostaglio}(2020)}]{korzekwa2020quantum}%
  \BibitemOpen
  \bibfield  {author} {\bibinfo {author} {\bibfnamefont {K.}~\bibnamefont
  {Korzekwa}}\ and\ \bibinfo {author} {\bibfnamefont {M.}~\bibnamefont
  {Lostaglio}},\ }\bibfield  {title} {\bibinfo {title} {Quantum advantage in
  simulating stochastic processes},\ }\href {https://arxiv.org/abs/2005.02403}
  {\bibfield  {journal} {\bibinfo  {journal} {arXiv:2005.02403}\ } (\bibinfo
  {year} {2020})}\BibitemShut {NoStop}%
\bibitem [{\citenamefont {Poon}\ and\ \citenamefont
  {Tsing}(1987)}]{poon1987inclusion}%
  \BibitemOpen
  \bibfield  {author} {\bibinfo {author} {\bibfnamefont {Y.-T.}\ \bibnamefont
  {Poon}}\ and\ \bibinfo {author} {\bibfnamefont {N.-K.}\ \bibnamefont
  {Tsing}},\ }\bibfield  {title} {\bibinfo {title} {Inclusion relations between
  orthostochastic matrices and products of pinching matrices},\ }\href
  {https://doi.org/https://doi.org/10.1080/03081088708817799} {\bibfield
  {journal} {\bibinfo  {journal} {Linear Multilinear A.}\ }\textbf {\bibinfo
  {volume} {21}},\ \bibinfo {pages} {253} (\bibinfo {year} {1987})}\BibitemShut
  {NoStop}%
\bibitem [{\citenamefont {Bengtsson}\ \emph {et~al.}(2005)\citenamefont
  {Bengtsson}, \citenamefont {Ericsson}, \citenamefont {Ku{\'s}}, \citenamefont
  {Tadej},\ and\ \citenamefont {{\.Z}yczkowski}}]{bengtsson2005birkhoff}%
  \BibitemOpen
  \bibfield  {author} {\bibinfo {author} {\bibfnamefont {I.}~\bibnamefont
  {Bengtsson}}, \bibinfo {author} {\bibfnamefont {{\AA}.}~\bibnamefont
  {Ericsson}}, \bibinfo {author} {\bibfnamefont {M.}~\bibnamefont {Ku{\'s}}},
  \bibinfo {author} {\bibfnamefont {W.}~\bibnamefont {Tadej}},\ and\ \bibinfo
  {author} {\bibfnamefont {K.}~\bibnamefont {{\.Z}yczkowski}},\ }\bibfield
  {title} {\bibinfo {title} {Birkhoff's polytope and unistochastic matrices,
  $n=3$ and $n=4$},\ }\href {https://doi.org/10.1007/s00220-005-1392-8}
  {\bibfield  {journal} {\bibinfo  {journal} {Commun. Math. Phys.}\ }\textbf
  {\bibinfo {volume} {259}},\ \bibinfo {pages} {307} (\bibinfo {year}
  {2005})}\BibitemShut {NoStop}%
\bibitem [{\citenamefont {Chan}(1999)}]{CR99}%
  \BibitemOpen
  \bibfield  {author} {\bibinfo {author} {\bibfnamefont {D.}~\bibnamefont
  {Chan}, \bibfnamefont {C.~S.and~Robbins}},\ }\bibfield  {title} {\bibinfo
  {title} {On the volume of the polytope of doubly stochastic matrices},\
  }\href {https://doi.org/10.1080/10586458.1999.10504406} {\bibfield  {journal}
  {\bibinfo  {journal} {Experiment. Math.}\ }\textbf {\bibinfo {volume} {8}},\
  \bibinfo {pages} {291} (\bibinfo {year} {1999})}\BibitemShut {NoStop}%
\bibitem [{\citenamefont {De~Loera}\ \emph {et~al.}(2009)\citenamefont
  {De~Loera}, \citenamefont {Liu},\ and\ \citenamefont {Yoshida}}]{DLLY09}%
  \BibitemOpen
  \bibfield  {author} {\bibinfo {author} {\bibfnamefont {J.~A.}\ \bibnamefont
  {De~Loera}}, \bibinfo {author} {\bibfnamefont {F.}~\bibnamefont {Liu}},\ and\
  \bibinfo {author} {\bibfnamefont {R.}~\bibnamefont {Yoshida}},\ }\bibfield
  {title} {\bibinfo {title} {A generating function for all semi-magic squares
  and the volume of the {B}irkhoff polytope},\ }\href
  {https://doi.org/10.1007/s10801-008-0155-y} {\bibfield  {journal} {\bibinfo
  {journal} {J. Algebraic Combinatorics}\ }\textbf {\bibinfo {volume} {30}},\
  \bibinfo {pages} {113} (\bibinfo {year} {2009})}\BibitemShut {NoStop}%
\bibitem [{\citenamefont {Canfield}\ and\ \citenamefont {McKay}(2009)}]{CMK07}%
  \BibitemOpen
  \bibfield  {author} {\bibinfo {author} {\bibfnamefont {E.~R.}\ \bibnamefont
  {Canfield}}\ and\ \bibinfo {author} {\bibfnamefont {B.~D.}\ \bibnamefont
  {McKay}},\ }\bibfield  {title} {\bibinfo {title} {The asymptotic volume of
  the {B}irkhoff polytope},\ }\href {https://arxiv.org/abs/0705.2422}
  {\bibfield  {journal} {\bibinfo  {journal} {Online J. Anal. Comb.}\ }\textbf
  {\bibinfo {volume} {4}},\ \bibinfo {pages} {4} (\bibinfo {year}
  {2009})}\BibitemShut {NoStop}%
\bibitem [{\citenamefont {Sinkhorn}(1964)}]{Si64}%
  \BibitemOpen
  \bibfield  {author} {\bibinfo {author} {\bibfnamefont {R.}~\bibnamefont
  {Sinkhorn}},\ }\bibfield  {title} {\bibinfo {title} {A relationship between
  arbitrary positive matrices and doubly stochastic matrices},\ }\href
  {https://www.jstor.org/stable/2238545} {\bibfield  {journal} {\bibinfo
  {journal} {Ann. Math. Statist.}\ }\textbf {\bibinfo {volume} {35}},\ \bibinfo
  {pages} {876} (\bibinfo {year} {1964})}\BibitemShut {NoStop}%
\bibitem [{\citenamefont {Cappellini}\ \emph {et~al.}(2009)\citenamefont
  {Cappellini}, \citenamefont {Sommers}, \citenamefont {Bruzda},\ and\
  \citenamefont {{\.Z}yczkowski}}]{CSBZ09}%
  \BibitemOpen
  \bibfield  {author} {\bibinfo {author} {\bibfnamefont {V.}~\bibnamefont
  {Cappellini}}, \bibinfo {author} {\bibfnamefont {H.-J.}\ \bibnamefont
  {Sommers}}, \bibinfo {author} {\bibfnamefont {W.}~\bibnamefont {Bruzda}},\
  and\ \bibinfo {author} {\bibfnamefont {K.}~\bibnamefont {{\.Z}yczkowski}},\
  }\bibfield  {title} {\bibinfo {title} {Random bistochastic matrices},\ }\href
  {https://doi.org/10.1088/1751-8113/42/36/365209/meta} {\bibfield  {journal}
  {\bibinfo  {journal} {J. Phys. A}\ }\textbf {\bibinfo {volume} {42}},\
  \bibinfo {pages} {365209} (\bibinfo {year} {2009})}\BibitemShut {NoStop}%
\bibitem [{\citenamefont {Marcus}\ \emph {et~al.}(1984)\citenamefont {Marcus},
  \citenamefont {Kidman},\ and\ \citenamefont {Sandy}}]{marcus1984products}%
  \BibitemOpen
  \bibfield  {author} {\bibinfo {author} {\bibfnamefont {M.}~\bibnamefont
  {Marcus}}, \bibinfo {author} {\bibfnamefont {K.}~\bibnamefont {Kidman}},\
  and\ \bibinfo {author} {\bibfnamefont {M.}~\bibnamefont {Sandy}},\ }\bibfield
   {title} {\bibinfo {title} {Products of elementary doubly stochastic
  matrices},\ }\href {https://doi.org/10.1080/03081088408817601} {\bibfield
  {journal} {\bibinfo  {journal} {Linear Multilinear A.}\ }\textbf {\bibinfo
  {volume} {15}},\ \bibinfo {pages} {331} (\bibinfo {year} {1984})}\BibitemShut
  {NoStop}%
\bibitem [{\citenamefont {Lostaglio}\ \emph {et~al.}(2018)\citenamefont
  {Lostaglio}, \citenamefont {Alhambra},\ and\ \citenamefont
  {Perry}}]{lostaglio2018elementary}%
  \BibitemOpen
  \bibfield  {author} {\bibinfo {author} {\bibfnamefont {M.}~\bibnamefont
  {Lostaglio}}, \bibinfo {author} {\bibfnamefont {{\'A}.~M.}\ \bibnamefont
  {Alhambra}},\ and\ \bibinfo {author} {\bibfnamefont {C.}~\bibnamefont
  {Perry}},\ }\bibfield  {title} {\bibinfo {title} {Elementary thermal
  operations},\ }\href {https://doi.org/10.22331/q-2018-02-08-52} {\bibfield
  {journal} {\bibinfo  {journal} {Quantum}\ }\textbf {\bibinfo {volume} {2}},\
  \bibinfo {pages} {52} (\bibinfo {year} {2018})}\BibitemShut {NoStop}%
\bibitem [{\citenamefont {Karabegov}(2008)}]{Ka08}%
  \BibitemOpen
  \bibfield  {author} {\bibinfo {author} {\bibfnamefont {A.}~\bibnamefont
  {Karabegov}},\ }\bibfield  {title} {\bibinfo {title} {A mapping from the
  unitary to doubly stochastic matrices and symbols on a finite set},\ }\href
  {https://doi.org/10.1063/1.3043870} {\bibfield  {journal} {\bibinfo
  {journal} {AIP Conf. Proc.}\ }\textbf {\bibinfo {volume} {1079}},\ \bibinfo
  {pages} {39} (\bibinfo {year} {2008})}\BibitemShut {NoStop}%
\bibitem [{\citenamefont {Au-Yeung}\ and\ \citenamefont {Cheng}(1991)}]{AYC91}%
  \BibitemOpen
  \bibfield  {author} {\bibinfo {author} {\bibfnamefont {Y.-H.}\ \bibnamefont
  {Au-Yeung}}\ and\ \bibinfo {author} {\bibfnamefont {C.-M.}\ \bibnamefont
  {Cheng}},\ }\bibfield  {title} {\bibinfo {title} {Permutation matrices whose
  convex combinations are orthostochastic},\ }\href
  {https://doi.org/10.1016/0024-3795(91)90172-S} {\bibfield  {journal}
  {\bibinfo  {journal} {Linear Algebr. Appl.}\ }\textbf {\bibinfo {volume}
  {150}},\ \bibinfo {pages} {242} (\bibinfo {year} {1991})}\BibitemShut
  {NoStop}%
\bibitem [{\citenamefont {Nakazato}(1996)}]{Na96}%
  \BibitemOpen
  \bibfield  {author} {\bibinfo {author} {\bibfnamefont {H.}~\bibnamefont
  {Nakazato}},\ }\bibfield  {title} {\bibinfo {title} {Set of $3\times 3$
  orthostochastic matrices},\ }\href
  {https://projecteuclid.org/euclid.nihmj/1273780099} {\bibfield  {journal}
  {\bibinfo  {journal} {Nihonkai Math. J.}\ }\textbf {\bibinfo {volume} {7}},\
  \bibinfo {pages} {83} (\bibinfo {year} {1996})}\BibitemShut {NoStop}%
\bibitem [{\citenamefont {Dunkl}\ and\ \citenamefont
  {{\.Z}yczkowski}(2009)}]{DZ09}%
  \BibitemOpen
  \bibfield  {author} {\bibinfo {author} {\bibfnamefont {C.}~\bibnamefont
  {Dunkl}}\ and\ \bibinfo {author} {\bibfnamefont {K.}~\bibnamefont
  {{\.Z}yczkowski}},\ }\bibfield  {title} {\bibinfo {title} {Volume of the set
  of unistochastic matrices of order $3$ and the mean {J}arlskog invariant},\
  }\href {https://doi.org/10.1063/1.3272543} {\bibfield  {journal} {\bibinfo
  {journal} {J. Math. Phys.}\ }\textbf {\bibinfo {volume} {50}},\ \bibinfo
  {pages} {123521} (\bibinfo {year} {2009})}\BibitemShut {NoStop}%
\bibitem [{\citenamefont {Davis}(1979)}]{Davis79}%
  \BibitemOpen
  \bibfield  {author} {\bibinfo {author} {\bibfnamefont {P.~J.}\ \bibnamefont
  {Davis}},\ }\href@noop {} {\emph {\bibinfo {title} {Circulant Matrices}}}\
  (\bibinfo  {publisher} {Wiley-Interscience Publication},\ \bibinfo {year}
  {1979})\BibitemShut {NoStop}%
\bibitem [{\citenamefont {Bola{\~n}os-Servin}\ and\ \citenamefont
  {Quezada}(2013)}]{BSQ13}%
  \BibitemOpen
  \bibfield  {author} {\bibinfo {author} {\bibfnamefont {J.~R.}\ \bibnamefont
  {Bola{\~n}os-Servin}}\ and\ \bibinfo {author} {\bibfnamefont
  {R.}~\bibnamefont {Quezada}},\ }\bibfield  {title} {\bibinfo {title} {A cycle
  decomposition and entropy production for circulant quantum {M}arkov
  semigroups},\ }\href {https://doi.org/10.1142/S0219025713500161} {\bibfield
  {journal} {\bibinfo  {journal} {Infin. Dimens. Anal. Quantum Probab. Relat.
  Top.}\ }\textbf {\bibinfo {volume} {16}},\ \bibinfo {pages} {1350016}
  (\bibinfo {year} {2013})}\BibitemShut {NoStop}%
\bibitem [{\citenamefont {Bola{\~n}os-Servin}\ and\ \citenamefont
  {Carbone}(2014)}]{BSC14}%
  \BibitemOpen
  \bibfield  {author} {\bibinfo {author} {\bibfnamefont {J.~R.}\ \bibnamefont
  {Bola{\~n}os-Servin}}\ and\ \bibinfo {author} {\bibfnamefont
  {R.}~\bibnamefont {Carbone}},\ }\bibfield  {title} {\bibinfo {title}
  {Spectral properties of circulant quantum {M}arkov semigroups},\ }\href
  {https://doi.org/10.1142/S1230161214500073} {\bibfield  {journal} {\bibinfo
  {journal} {Open Syst. Inf. Dyn.}\ }\textbf {\bibinfo {volume} {21}},\
  \bibinfo {pages} {1450007} (\bibinfo {year} {2014})}\BibitemShut {NoStop}%
\bibitem [{\citenamefont {Bola{\~n}os-Servin}\ \emph
  {et~al.}(2019)\citenamefont {Bola{\~n}os-Servin}, \citenamefont {Carbone},\
  and\ \citenamefont {Quezada}}]{BSCQ19}%
  \BibitemOpen
  \bibfield  {author} {\bibinfo {author} {\bibfnamefont {J.~R.}\ \bibnamefont
  {Bola{\~n}os-Servin}}, \bibinfo {author} {\bibfnamefont {R.}~\bibnamefont
  {Carbone}},\ and\ \bibinfo {author} {\bibfnamefont {R.}~\bibnamefont
  {Quezada}},\ }\bibfield  {title} {\bibinfo {title} {Structure and block
  representation for circulant quantum processes},\ }\href
  {https://doi.org/10.1142/S0219025719500176} {\bibfield  {journal} {\bibinfo
  {journal} {Infin. Dimens. Anal. Quantum Probab. Relat. Top.}\ }\textbf
  {\bibinfo {volume} {22}},\ \bibinfo {pages} {1950017} (\bibinfo {year}
  {2019})}\BibitemShut {NoStop}%
\bibitem [{\citenamefont {Smith}(2015)}]{Sm15}%
  \BibitemOpen
  \bibfield  {author} {\bibinfo {author} {\bibfnamefont {A.~C.}\ \bibnamefont
  {Smith}},\ }\bibfield  {title} {\bibinfo {title} {Unistochastic matrices and
  related problems},\ }in\ \href@noop {} {\emph {\bibinfo {booktitle}
  {Mathematics and Computing}}}\ (\bibinfo  {publisher} {Springer},\ \bibinfo
  {year} {2015})\ pp.\ \bibinfo {pages} {239--250}\BibitemShut {NoStop}%
\bibitem [{\citenamefont {Au-Yeung}\ and\ \citenamefont {Poon}(1979)}]{AYP79}%
  \BibitemOpen
  \bibfield  {author} {\bibinfo {author} {\bibfnamefont {Y.-H.}\ \bibnamefont
  {Au-Yeung}}\ and\ \bibinfo {author} {\bibfnamefont {Y.-T.}\ \bibnamefont
  {Poon}},\ }\bibfield  {title} {\bibinfo {title} {$3 \times 3$ orthostochastic
  matrices and the convexity of the generalized numerical range},\ }\href
  {https://doi.org/10.1016/0024-3795(79)90032-6} {\bibfield  {journal}
  {\bibinfo  {journal} {Linear Algebr. Appl.}\ }\textbf {\bibinfo {volume}
  {27}},\ \bibinfo {pages} {69} (\bibinfo {year} {1979})}\BibitemShut {NoStop}%
\bibitem [{\citenamefont {Jarlskog}\ and\ \citenamefont {Stora}(1988)}]{JS88}%
  \BibitemOpen
  \bibfield  {author} {\bibinfo {author} {\bibfnamefont {C.}~\bibnamefont
  {Jarlskog}}\ and\ \bibinfo {author} {\bibfnamefont {R.}~\bibnamefont
  {Stora}},\ }\bibfield  {title} {\bibinfo {title} {Unitarity polygons and
  {C}{P} violation areas and phases in the standard electroweak model},\ }\href
  {https://doi.org/10.1016/0370-2693(88)90428-5} {\bibfield  {journal}
  {\bibinfo  {journal} {Phys. Lett.}\ }\textbf {\bibinfo {volume} {B 208}},\
  \bibinfo {pages} {268} (\bibinfo {year} {1988})}\BibitemShut {NoStop}%
\bibitem [{\citenamefont {Watrous}(2018)}]{watrous2018theory}%
  \BibitemOpen
  \bibfield  {author} {\bibinfo {author} {\bibfnamefont {J.}~\bibnamefont
  {Watrous}},\ }\href {https://cs.uwaterloo.ca/~watrous/TQI/TQI.pdf} {\emph
  {\bibinfo {title} {The theory of quantum information}}}\ (\bibinfo
  {publisher} {Cambridge University Press},\ \bibinfo {year}
  {2018})\BibitemShut {NoStop}%
\bibitem [{\citenamefont {Jaekel}(2011)}]{jaekel2011polytopes}%
  \BibitemOpen
  \bibfield  {author} {\bibinfo {author} {\bibfnamefont {U.}~\bibnamefont
  {Jaekel}},\ }\bibfield  {title} {\bibinfo {title} {A {M}onte {C}arlo method
  for high-dimensional volume estimation and application to polytopes},\ }\href
  {https://doi.org/https://doi.org/10.1016/j.procs.2011.04.151} {\bibfield
  {journal} {\bibinfo  {journal} {Procedia Comput. Sci.}\ }\textbf {\bibinfo
  {volume} {4}},\ \bibinfo {pages} {1403 } (\bibinfo {year}
  {2011})}\BibitemShut {NoStop}%
\bibitem [{\citenamefont {Pak}(2000)}]{pak2000Birkhoff}%
  \BibitemOpen
  \bibfield  {author} {\bibinfo {author} {\bibfnamefont {I.}~\bibnamefont
  {Pak}},\ }\bibfield  {title} {\bibinfo {title} {Four questions on {B}irkhoff
  polytope},\ }\href {https://doi.org/10.1007/PL00001277} {\bibfield  {journal}
  {\bibinfo  {journal} {Ann. Comb.}\ }\textbf {\bibinfo {volume} {4}},\
  \bibinfo {pages} {83} (\bibinfo {year} {2000})}\BibitemShut {NoStop}%
\bibitem [{\citenamefont {Rajchel}\ \emph {et~al.}(2018)\citenamefont
  {Rajchel}, \citenamefont {G{\k{a}}siorowski},\ and\ \citenamefont
  {{\.Z}yczkowski}}]{rajchel2018unistochastic}%
  \BibitemOpen
  \bibfield  {author} {\bibinfo {author} {\bibfnamefont {G.}~\bibnamefont
  {Rajchel}}, \bibinfo {author} {\bibfnamefont {A.}~\bibnamefont
  {G{\k{a}}siorowski}},\ and\ \bibinfo {author} {\bibfnamefont
  {K.}~\bibnamefont {{\.Z}yczkowski}},\ }\bibfield  {title} {\bibinfo {title}
  {Robust {H}adamard matrices, unistochastic rays in {B}irkhoff polytope and
  equi-entangled bases in composite spaces},\ }\href
  {https://doi.org/10.1007/s11786-018-0384-y} {\bibfield  {journal} {\bibinfo
  {journal} {Math. Comput. Sci.}\ }\textbf {\bibinfo {volume} {12}},\ \bibinfo
  {pages} {473} (\bibinfo {year} {2018})}\BibitemShut {NoStop}%
\bibitem [{\citenamefont {{\.Z}yczkowski}\ \emph {et~al.}(2003)\citenamefont
  {{\.Z}yczkowski}, \citenamefont {Ku{\'s}}, \citenamefont
  {S{\l}omczy{\'n}ski},\ and\ \citenamefont {Sommers}}]{ZKSS03}%
  \BibitemOpen
  \bibfield  {author} {\bibinfo {author} {\bibfnamefont {K.}~\bibnamefont
  {{\.Z}yczkowski}}, \bibinfo {author} {\bibfnamefont {M.}~\bibnamefont
  {Ku{\'s}}}, \bibinfo {author} {\bibfnamefont {W.}~\bibnamefont
  {S{\l}omczy{\'n}ski}},\ and\ \bibinfo {author} {\bibfnamefont {H.-J.}\
  \bibnamefont {Sommers}},\ }\bibfield  {title} {\bibinfo {title} {Random
  unistochastic matrices},\ }\href
  {https://doi.org/10.1088/0305-4470/36/12/333/meta} {\bibfield  {journal}
  {\bibinfo  {journal} {J. Phys. A}\ }\textbf {\bibinfo {volume} {36}},\
  \bibinfo {pages} {3425} (\bibinfo {year} {2003})}\BibitemShut {NoStop}%
\bibitem [{\citenamefont {Karpelevich}(1951)}]{Ka51}%
  \BibitemOpen
  \bibfield  {author} {\bibinfo {author} {\bibfnamefont {F.~I.}\ \bibnamefont
  {Karpelevich}},\ }\href {http://mi.mathnet.ru/eng/izv3317} {\bibfield
  {journal} {\bibinfo  {journal} {Izvestia Acad. Nauk SSSR, Seria Mathem.}\
  }\textbf {\bibinfo {volume} {15}},\ \bibinfo {pages} {361} (\bibinfo {year}
  {1951})}\BibitemShut {NoStop}%
\bibitem [{\citenamefont {Djokovi{\v c}}(1990)}]{Dj90}%
  \BibitemOpen
  \bibfield  {author} {\bibinfo {author} {\bibfnamefont {D.~{\v Z}.}\
  \bibnamefont {Djokovi{\v c}}},\ }\bibfield  {title} {\bibinfo {title} {Cyclic
  polygons, roots of polynomials with decreasing nonnegative coefficients, and
  eigenvalues of stochastic matrices},\ }\href
  {https://doi.org/10.1016/0024-3795(90)90266-F} {\bibfield  {journal}
  {\bibinfo  {journal} {Linear Algebr. Appl.}\ }\textbf {\bibinfo {volume}
  {142}},\ \bibinfo {pages} {173} (\bibinfo {year} {1990})}\BibitemShut
  {NoStop}%
\bibitem [{\citenamefont {Kaiser}(2006)}]{kaiser2006mean}%
  \BibitemOpen
  \bibfield  {author} {\bibinfo {author} {\bibfnamefont {N.}~\bibnamefont
  {Kaiser}},\ }\bibfield  {title} {\bibinfo {title} {Mean eigenvalues for
  simple, simply connected, compact {L}ie groups},\ }\href
  {https://doi.org/10.1088/0305-4470/39/49/013} {\bibfield  {journal} {\bibinfo
   {journal} {J. Phys. A: Math. Theor.}\ }\textbf {\bibinfo {volume} {39}},\
  \bibinfo {pages} {15287} (\bibinfo {year} {2006})}\BibitemShut {NoStop}%
\bibitem [{\citenamefont {Farouki}\ \emph {et~al.}(2001)\citenamefont
  {Farouki}, \citenamefont {Moon},\ and\ \citenamefont {Ravani}}]{FMR01}%
  \BibitemOpen
  \bibfield  {author} {\bibinfo {author} {\bibfnamefont {R.}~\bibnamefont
  {Farouki}}, \bibinfo {author} {\bibfnamefont {H.}~\bibnamefont {Moon}},\ and\
  \bibinfo {author} {\bibfnamefont {B.}~\bibnamefont {Ravani}},\ }\bibfield
  {title} {\bibinfo {title} {Minkowski geometric algebra of complex sets},\
  }\href {https://doi.org/10.1023/A:1010318011860} {\bibfield  {journal}
  {\bibinfo  {journal} {Geom. Dedicata}\ }\textbf {\bibinfo {volume} {85}},\
  \bibinfo {pages} {283} (\bibinfo {year} {2001})}\BibitemShut {NoStop}%
\end{thebibliography}%

\end{document}